\documentclass[10pt]{article}
\usepackage{dsfont}             %
\usepackage{amsmath}
\usepackage{amsthm}
\usepackage{amssymb}
\usepackage{algorithmicx}
\usepackage{algorithm}
\usepackage{algpseudocode}
\usepackage{graphicx}
\usepackage{hyperref}
\usepackage{mathabx}
\usepackage[letterpaper,margin=1in]{geometry}
\renewcommand{\setminus}{-}
\newcommand{\R}{\mathbb{R}}    
\theoremstyle{plain}
\newtheorem{theorem}{Theorem}[section]
\newtheorem{lemma}[theorem]{Lemma}
\newtheorem{definition}[theorem]{Definition}

\newtheorem{prop}[theorem]{Proposition}

\newtheorem{fact}{Fact}

\theoremstyle{remark}

\newcommand{\eps}{\varepsilon}
\newcommand{\opt}{\text{OPT}}

\newcommand{\dist}{\text{dist}}

\newcommand{\cost}{\text{cost}}

\newcommand{\event}{\mathcal{E}}

\newcommand{\calD}{\mathcal{D}}

\newcommand{\calL}{\mathcal{L}}

\newcommand{\calM}{\mathcal{M}}

\newcommand{\calP}{\mathcal{P}}

\newcommand{\tw}{\tilde{w}}

\newcommand{\local}{L}

\newcommand{\globalS}{\mathcal{G}}

\newcommand{\Vloc}{V_{L}}
\newcommand{\Vglob}{V_{\opt}}

\newcommand{\improv}{\text{Improv}}

\newcommand{\irr}{\text{moat}}

\def\eg{\textit{e.g., }}
\def\ie{\textit{i.e., }}


\begin{document}

%
%

\title{A Fast Approximation Scheme for Low-Dimensional $k$-Means}
\author{Vincent Cohen-Addad\\ University of Copenhagen}
\date{}

\maketitle
\begin{abstract}
  We consider the popular $k$-means problem in $d$-dimensional
  Euclidean space. Recently Friggstad, Rezapour, Salavatipour [FOCS'16]
  and Cohen-Addad, Klein, Mathieu
  [FOCS'16] showed that the standard local search algorithm yields a 
  $(1+\eps)$-approximation
  in time $(n \cdot k)^{1/\eps^{O(d)}}$, giving the first 
  polynomial-time approximation
  scheme for the problem in low-dimensional Euclidean space.
  While local search achieves optimal approximation guarantees, it is not 
  competitive with the state-of-the-art heuristics such as the famous 
  $k$-means++ and $D^2$-sampling algorithms.

  In this paper, we aim at bridging the gap between theory and practice by
  giving a $(1+\eps)$-approximation algorithm for low-dimensional 
  $k$-means 
  running in time $n \cdot k \cdot (\log n)^{(d\eps^{-1})^{O(d)}}$,
  and so  matching the running time of the $k$-means++ and 
  $D^2$-sampling heuristics 
  up to polylogarithmic factors.
  We speed-up
  the local search approach by making a non-standard use
  of randomized dissections that allows to 
  find the best local move efficiently using a quite simple
  dynamic
  program. We hope that our techniques could help
  design better
  local search heuristics for geometric problems.
\end{abstract}

\section{Introduction}
The $k$-means objective is arguably the most popular clustering 
objective among practitioners. While originally motivated by applications
in image compression, the $k$-means problem has proven to be a 
successful objective to optimize in order to pre-process and extract
information from datasets. 
Its most successful applications are now stemming from 
machine learning problems like for example learning 
mixture of Gaussians, Bregman clustering, or DP-means~\cite{Feldman:2986698,LucicBK16,Bachem.3045142}. 
Thus, it has become a classic problem in both machine learning
and theoretical computer science.

Given a set of points in a metric space, the $k$-means
problem asks for a
 set of $k$ points, called \emph{centers},
that minimizes the sum of the squares of the 
distances of the points to their closest center.

The most famous algorithm for $k$-means is arguably the 
Lloyd\footnote{Also referred to as Lloyd-Forgy} heuristic introduced
in the 80s~\cite{Llo82} and sometimes 
referred to as ``the $k$-means algorithm''.
While this algorithm is very competitive in
practice and yields empirically 
good approximate solutions to real-world inputs,
it is known that its running time can be exponential in the input 
size and that it can return arbitrarily bad solutions
in the worst-case (see~\cite{ArV06}). This induces a gap between
theory and practice.

Thus, to fix this unsatisfactory situation,
Arthur and Vassilvitskii~\cite{ArV07} have designed a variant
of the  Lloyd heuristic, called 
$k$-means++,
and proved that it achieves an $O(\log k)$ approximation.
The $k$-means++ algorithm has now become 
a standard routine that is part of several machine learning libraries
(\eg\cite{Petal11}) and is widely-used in practice.
While this has been a major step for reducing the gap between
theory and practice, it has remained an important problem as to
conceive algorithms with nearly-optimal approximation guarantee.

Unfortunately, the $k$-means problem is known to be APX-Hard
even for (high dimensional) Euclidean inputs~\cite{ACKS15}. 
Hence, to design
competitive approximation schemes it is needed to restrict our
attention to classes of ``more structured'' inputs that are important 
in practice.
The low-dimensional Euclidean inputs form a class of 
inputs that naturally arise in image processing and
machine learning (see examples of~\cite{Petal11} or
in~\cite{KMNPSW04}).
Thus, finding a polynomial
time approximation scheme (PTAS) for 
$O(1)$-dimensional
Euclidean inputs of $k$-means  has been
an important research problem for the last 20 years 
since the seminal work of~\cite{IKI94}. 
Recently, Friggstad et
al.~\cite{FRS16a}
and Cohen-Addad et al.~\cite{CAKM16} both showed that the classic
local search heuristic with neighborhood of magnitude $(d/\eps)^{O(d)}$
achieves a $(1+\eps)$-approximation.
While this has been an important result for the theory community, it 
has a much weaker impact for practitioners since the running time
of the algorithm is $n^{(d/\eps)^{O(d)}}$. 
Therefore, to reduce the gap between
theory and practice, it is natural to ask for
near-optimal approximation algorithms with competitive
running time.
This is the goal of this paper.

\paragraph{Fast local search techniques are important.}
The result of Friggstad et al. and Cohen-Addad et al. has been 
preceded by several result showing that local search achieves good
approximation bound or even exact algorithms for various problems 
(see~\eg\cite{nabil1,Krohn14,BaV15,MarxP15,Chan09}).
Furthermore, there is a close relationship
between local search and clustering showing that the standard local
search heuristics achieve very good approximation guarantees
in various settings 
(in addition to the two 
aforementioned papers, 
see~\cite{AGKMMP04,KMNPSW04,Cohen-AddadS17,CM15,BaV15,MakarychevMSW16}).
Moreover local search approaches are extremely popular in practice 
because they are easy to tune, easy to implement, and easy
to run in parallel. 

Thus, it has become part of the research agenda of the theory community
to develop fast local search approaches while preserving the 
guarantees on the quality of the output (see \eg\cite{Cygan13,nabil2}).

\subsection{Our Results}
We show that our fast local search algorithm 
(Algorithm~\ref{algo:Clustering})
yields a PTAS for the slightly more
general variants of the $k$-means problem where centers can 
have an opening cost
(a.k.a. weight).

\begin{theorem}
  \label{thm:main}
  There exists a randomized
  algorithm (Algorithm~\ref{algo:Clustering}) 
  that returns a
  $(1+\eps)$ approximation to the center-weighted $d$-dimensional 
  Euclidean
  $k$-means problem in time $n \cdot k \cdot 
  (\log n)^{(d \eps^{-1})^{O(d)}}$ with probability at least $1/2$.
\end{theorem}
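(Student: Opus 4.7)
The plan is to combine the local-search PTAS of Friggstad et al.~\cite{FRS16a} and Cohen-Addad et al.~\cite{CAKM16} with a randomly-shifted quadtree dissection (in the spirit of Arora), so that the brute-force enumeration of all $(d/\eps)^{O(d)}$-swaps is replaced by a dynamic program whose state space is only quasipolynomial. Running the DP inside a standard local-search loop will yield both the approximation guarantee and the claimed running time.

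\textbf{Local-search framework.} I would maintain a candidate solution $L$ of $k$ centers and, at each iteration, look for a swap of at most $\rho=(d/\eps)^{O(d)}$ centers that improves the cost by at least a $(1-\Omega(\eps/k))$ factor. The structural analysis of the cited works already guarantees that any such local optimum is a $(1+\eps)$-approximation (this is what I would cite, not reprove), and a standard potential argument bounds the number of outer iterations by $\tilde{O}(k/\eps)$.

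\textbf{Randomized dissection and moat-respecting swaps.} I would rescale so that all points lie in a box of side $\mathrm{poly}(n)$, apply a uniformly random shift, and build a $2^d$-ary dissection tree of depth $O(\log n)$ with $\nbportals$ portals on the boundary of every cell. Call a swap \emph{moat-respecting} if every reassignment of a point to its new center uses only portals of the dissection to cross cell boundaries. A moat-portal argument in the style of Arora shows that for any fixed candidate swap of improvement $\Delta$, with constant probability over the random shift there is a moat-respecting swap using the same centers whose improvement is at least $(1-\eps)\Delta$. Hence restricting local search to moat-respecting swaps does not degrade the approximation ratio.

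\textbf{Dynamic program and success probability.} The best moat-respecting swap is found by DP on the dissection tree. At a cell $C$ the state records (i) which of the at most $\rho$ current centers inside $C$ are closed, (ii) which at most $\rho$ new centers are opened inside $C$, and (iii) for each portal on $\partial C$, the identity of the center that receives the points whose assignment crosses that portal. Since each cell has $\nbportals$ portals, the number of states per cell is at most $(\log n)^{(d\eps^{-1})^{O(d)}}$; children are combined by consistency on shared portals, and leaves enumerate their $O(1)$ points. Summed over the $O(n\log n)$ cells this gives a single iteration of cost $n\cdot k\cdot (\log n)^{(d\eps^{-1})^{O(d)}}$, which absorbs the $\tilde{O}(k/\eps)$ outer iterations into the exponent. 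A constant number of independent random shifts per iteration, together with a union bound, raises the overall success probability to $1/2$. The hardest step is the moat-portal structural lemma: I must show that the unbounded number of point-to-center assignments of an improving swap can be routed through the small portal grid while losing only an $\eps$-fraction of the improvement, and that the portal-assignment summary used in the DP captures exactly this information in only polylogarithmically many states. Everything else is careful bookkeeping.
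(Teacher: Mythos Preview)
Your high-level framework---local search with $\rho=(d/\eps)^{O(d)}$-swaps, random quadtree, DP to find the best swap, $\tilde O(k)$ outer iterations---matches the paper. The gap is in the structural lemma you flag as ``the hardest step.''

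You propose an Arora-style portal argument: place $\nbportals$ portals on each cell boundary and route every client--center assignment through the nearest portal, arguing the expected detour costs only an $\eps$-fraction of the improvement. This is exactly the approach that works for $k$-median and \emph{fails} for $k$-means, and the paper says so explicitly in Section~\ref{sec:techniques}. With squared costs, a client at distance $D$ from its center that is forced through a portal at detour $t$ pays $(D+t)^2 - D^2 = 2Dt + t^2$; the cross term $2Dt$ is not controlled by bounding $t$ alone, because $D$ can be arbitrarily large relative to the cell where the crossing happens. The standard ``probability of separation at level $i$ is $2^i/\calL$, detour is $O(\calL/2^i)$'' telescoping that makes Arora/Kolliopoulos--Rao work for linear cost does not close for squared cost, and you give no new mechanism to close it. Your DP state, which stores a portal-to-center assignment, inherits this problem: it encodes a routed solution whose cost you cannot relate to the true (unrouted) swap cost within $(1\pm\eps)$.

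The paper circumvents portals entirely. It calls a center a \emph{moat} center if it lies within $\eps^{13}\,\partial R/\log n$ of the boundary of some cell $R$; Lemma~\ref{lemma:keyprob} shows that with probability $\ge 1/2$ the moat centers of $L\cup\opt$ carry only an $\eps^9$-fraction of the relevant weight. Proposition~\ref{prop:nearopt} then builds a near-optimal solution $S^*$ whose set of moat centers is \emph{exactly} the set of moat centers of $L$ (by swapping isolated pairs, rerouting non-isolated moat centers of $\opt$ to their nearest $L$-center via Lemma~\ref{lem:newreass}, and using Theorem~\ref{thm:deletion} to free up slots). Given this, every center in the symmetric difference $(L\setminus S^*)\cup(S^*\setminus L)$ is guaranteed to be far from all boundaries, so its position can be discretized to $(\log n/\eps)^{O(d)}$ rounded coordinates per cell with only $(1+\eps/\log n)$ multiplicative error per level. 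The DP state (Section~\ref{sec:DP}) is therefore the list of $\le\delta$ rounded swap-center coordinates and weights---no portals, no per-portal assignment. That is the missing idea your proposal needs.
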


We would like to remark that the doubly exponential dependency 
in $d$ is needed: 
Awasthi et al.~\cite{ACKS15} showed that the Euclidean $k$-means 
is APX-Hard when $d = \Omega(\log n)$.
Note that it is possible to obtain an arbitrarily small
probability of failure $p > 0$ by repeating the algorithm $\log 1/p$ 
times.

As far as we know, this is the first occurrence of a local search 
algorithm whose neighborhood size only impacts the running time 
by polylogarithmic factors.

\subsection{Other Related Work}
The $k$-means problem is known to be NP-Hard, even when restricted
to inputs lying in the Euclidean plane 
(Mahajan et al. ~\cite{MNV12}, and Dasgupta and Freud~\cite{DaF09}) and
was recently shown to be APX-hard in Euclidean space of dimension
$\Omega(\log n)$ (\cite{ACKS15}).

There has been a large body of work on approximation algorithms 
for the Euclidean $k$-means problem (see \eg\cite{KMNPSW04}),
very recently
Ahmadian et al.~\cite{ahmadian2016better} gave a 6.357-approximation
improving over the 9-approximation of Kanungo et al.~\cite{KMNPSW04}.


Given the hardness results, researchers have focused on 
different scenarios. There have been various 
$(1+\varepsilon)$-approximation algorithms when $k$ is 
considered a fixed-parameter
(see \eg\cite{FeL11,KSS10}). Another successful approach
has been through the definition of ``stable instances'' to characterize
the real-world instances stemming from machine learning and data 
analysis 
(see for example 
\cite{AwS12,ABS10,BaL16,BBG09,BiL12,KuK10,ORSS12,Cohen-AddadS17,AngelidakisMM17}) 
or in the context of smoothed analysis (see for 
example \cite{ArV09,AMR11}). 
In the case of low-dimensional inputs, Bandyapadhyay and
Varadarajan showed that local search with neighborhood of size
$\eps^{-O(d)}$ achieves a $(1+\eps)$-approximation~\cite{BaV15} when
allowed to open $O(\eps k)$ extra centers. As mentioned before,
this results has been improved by Friggstad et al.~\cite{FRS16a}
and Cohen-Addad et al.~\cite{CAKM16} who showed that even when
constrained to open exactly $k$ centers, local search achieves
a $(1+\eps)$-approximation.

\paragraph{Related work on local search}
Local Search heuristics belong to the toolbox of all practitioners,
(see Aarts and Lenstra~\cite{AaL97} for a general introduction).
As mentioned before, there is a tight connection between local search
and clustering:
Arya et al.~\cite{AGKMMP04} proved that local search with a neighborhood size of $1/\varepsilon$ yields a $3+2\varepsilon$ approximation to $k$-median. For the $k$-means problem,
Kanungo et al.~\cite{KMNPSW04} showed a similar result by proving 
that the approximation guarantee of local search for Euclidean $k$-means
is $9+\varepsilon$.
For more applied examples of local search an clustering 
see~\cite{BBLM14,BeT10,GMMMO03,tuzun1999,Ghosh2003150,Ardjmand201432,hansen2001variable}.
For other theoretical example of local search for clustering, 
we refer to~\cite{SAD04,DGK02,FrZ16,HaM01,YSZUC08}.

\paragraph{Related work on $k$-median.}
The $k$-median problem has been widely studied.
For the best known results in terms of approximation for general
metric spaces inputs we refer to Li and Svensson~\cite{LiS13} and 
Byrka et al.~\cite{BPRST15}. More related to our results are the 
approximation schemes for $k$-median in low-dimensional Euclidean
space given by Arora et al.~\cite{ARR98} who gave  a
$(1+\eps)$-approximation algorithm running in time
$n^{\eps^{-O(d)}}$. This was later improved by Kolliopoulos and 
Rao~\cite{KoR07} who obtained a running time of $2^{\eps^{-O(d)}} n\cdot
\text{polylog}~n$. Quite surprisingly, it is pretty unclear whether 
the techniques used by Arora et al. and Kolliopoulos and Rao could be
used to obtain a $(1+\eps)$-approximation for the $k$-means problem; 
this has induced a 20-year gap between the first PTAS for $k$-median
and the first PTAS for $k$-means in low-dimensional Euclidean space.
See Section~\ref{sec:techniques} for more details.

\subsection{Overview of the Algorithm and the Techniques}
\label{sec:techniques}
Our proof is rather simple.
Given a solution $L$, our goal is to identify -- in near-linear
time -- a minimum
cost solution $L'$
such that $|L - L'| + |L' - L| \le \delta$ for some (constant) parameter 
$\delta$. If $\cost(L) - \cost(L') = O(\eps \cost(\opt)/k)$,
then we can immediately apply the result of  
Friggstad et al.~\cite{FRS16a}
or Cohen-Addad et al.~\cite{CAKM16}: the solution $L$ is 
locally optimal and its cost is at most $(1+O(\eps))\cost(\opt)$.
Finding $L$ has
to be done in near-linear time since we could repeat this 
process up to $\Theta(k)$ times
until reaching a local optimum.
Hence, the crux of the algorithm is to efficiently identify $L'$,
it proceeds as follows (see Algorithm~\ref{algo:Clustering} for a full description): 
\begin{enumerate}
\item Compute a random recursive decomposition of $L$
  (see Section~\ref{sec:diss}); 
\item Apply dynamic programming on the recursive decomposition;
  we show that there exists a near-optimal solution
  whose interface between different regions
  has small complexity.
\end{enumerate}

To obtain our recursive decomposition we make a quite non-standard
use of the classic quadtree dissection techniques 
(see Section~\ref{sec:diss}).
Indeed, the $k$-means problem is famous for being ``resilient'' to 
random quadtree approaches -- this is partly why a PTAS for the 
low-dimensional $k$-median problem was obtained 20 years ago but the 
first PTAS for the $k$-means was only found last year. More precisely,
the classic quadtree approach (which works well for $k$-median) 
defines portals on the boundary of 
the regions of the dissection and forces the clients of a given region
that are served by a center that is in a different region 
(in the optimal solution) to make a detour through the closest portal.
This is a key property as it allows to bound the complexity of 
$\opt$ between regions. Unfortunately, when dealing with squared 
distances, making a detour could result in a dramatic cost increase
and it is not clear that it could be compensated by the fact that 
the event of separating a client form its center happens with small
probability (applying the analysis of 
Arora et al.~\cite{ARR98}
or Kolliopoulos et al.~\cite{KoR07}). This problem comes from
the fact that some facilities of $\opt$ and $L$ might be too
close from the boundary of the dissection (and so, their
clients might have to make too important detour (relative
to their cost in $\opt$) through
the portals). We call these facility the ``moat'' facilities
(as they fall in a bounded-size ``moat'' around the boundaries).

We overcome this barrier by defining a more structured
near-optimal solution as follows: (1) when a facility of $\opt$ is too close from the boundary of our dissection we simply remove it and (2) if a facility of the current solution $L$ is 
too close from a boundary of our dissection we add it to $\opt$.

%

Of course, this induces two problems: first we have to bound
the cost of removing the facilities of $\opt$ and second we
have to show that adding the facilities of $L$ does not result
in a solution containing more than $k$ centers.
This is done through some technical lemmas and using the concept
of \emph{isolated facilities} inherited from
Cohen-Addad et al.~\cite{CAKM16}. Section~\ref{sec:structthm}
shows the existence of near-optimal solution $S^*$ whose set
of $\irr$ facilities (\ie facilities that are too close from
the boundaries) is exactly the set of $\irr$ facilities
of $L$ (and so, we already know their location and so the
exact cost of assigning a given client to such a facility).

We now aim at using the result of Friggstad et al.~\cite{FRS16a}
and Cohen-Addad et al.~\cite{CAKM16}: if for any sets
$\Delta_1 \subseteq L$ and $\Delta_2 \subseteq S^*$,
we have $\cost(L) - \cost(L - \Delta_1 \cup \Delta_2) =
O(\eps \cost(\opt)/k)$, then we have $\cost(L) \le
(1+O(\eps)\cost(S^*)$.
Thus, we provide a dynamic program (in Section~\ref{sec:DP}) 
that allows to find the best solution $S$ that is such that
(1) its set of $\irr$
facilities is exactly the set of $\irr$ facilities
of $L$ and (2) $|S-L|+|L-S| \le \delta$ for some fixed
constant $\delta$. The dynamic program simply ``guess'' the
approximate location of the centers of $(L-S) \cup (S-L)$,
we show that since each such center is far from the boundary,
its location can be approximated.

\subsection{Preliminaries}
In this article, we consider the $k$-means problem in a $d$-dimensional Euclidean space:
Given a set $A$ of points (also referred to as clients) and candidate centers $C$ in $\R^d$,
the goal is to output a set $S\subseteq C$ of size $k$ that minimizes:
$\sum_{a \in A} \dist(a,C)^2$, where $\dist(a,C) = \min_{c \in C} \dist(a,c)$. We refer to $S$ as a set of centers or facilities.
Our results naturally extends to any objective function of the form 
$\sum_{a \in A} \dist(a,C)^p$ for constant $p$. For ease of exposition, we focus on the 
$k$-means problem. As Friggstad et al~\cite{FRS16a}, we also 
consider the more general version
called weighted $k$-means for which, in addition of the sets $A$ and $C$, we are given a weight
function $w : C \mapsto \R_+$ and the goal is to minimize
$\sum_{c \in C} w(c) + \sum_{a \in A} \dist(a,C)^2$.

A classic result of Matousek~\cite{Mat00} shows that, if $C = \R^d$, it is possible to compute in linear
time a set $C'$ of linear size (and polynomial dependency in $d$ and $\eps$) such that the optimal solution using the 
centers in $C'$ cost at most $(1+\eps)$ times the cost of the optimal solution using
$C$. Hence, we assume without loss of generality that 
$|C|$ has size linear in $|A|$ and we
let $n = |A| + |C|$.

\paragraph{Isolated Facilities.} We make use of the notion of isolated facilities
introduced by Cohen-Addad et al.~\cite{CAKM16} defined as follows.
Let $\eps_0 <1/2$ be a positive constant and $\local$ and $\globalS$ be two solutions for the 
Euclidean $k$-means problem.

\begin{definition} 
  Let $\eps_0 <1/2$ be a positive number and let $\local$ and $\globalS$ be two solutions for the $k$-clustering problem with parameter $p$.
  Given a facility $f_0 \in \globalS$ and a facility $\ell\in \local$, 
we say that the pair
  $(f_0,\ell)$ is  {\em 1-1 $\eps_0$-isolated} if 
  most of the clients served by $\ell$ in $\local$ are served by 
  $f_0$ in $\globalS$, and
  most of the clients served by $f_0$ in $\globalS$ are served by 
  $\ell$ in $\local$; formally, if
  $$|\Vloc(\ell) \cap \Vglob(f_0)  |   \ge \max \left\{ \begin{array}{l} 
                                                     (1-\eps_0)   
                                                     |\Vloc(\ell)|,    \\
   (1-\eps_0) |\Vglob(f_0)| \end{array}\right\} $$
\end{definition}
When $\eps_0$ is clear from the context we refer to 1-1 $\eps_0$-isolated 
pairs as 1-1 isolated pairs.

\begin{theorem}[Theorem III.7 in~\cite{CAKM16}]\label{thm:deletion}
  Let $\eps_0 <1/2$ be a positive number and let $\local$ and $\globalS$ be two solutions for the $k$-clustering problem with exponent $p$.
  Let $\bar k$ denote the number of facilities $f$ of $\globalS$  that
  are not in a 1-1 $\eps_0$-isolated region. 
  There exists a constant $c$ and a set $S_0$ of facilities of $\globalS$ of size at least $\eps_0^3 \bar k/6$ that can be removed 
  from $\globalS$ at low cost:  $\cost(\globalS \setminus S_0) \le (1+c \cdot\eps_0) \cost(\globalS) +  c \cdot \eps_0\,\cost(\local)$.
\end{theorem}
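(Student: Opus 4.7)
The plan is to prove the theorem by a constructive rerouting argument: identify a subset $S_0$ of the $\bar k$ non-isolated global facilities whose removal can be absorbed by sending each displaced client through the local solution $\local$ to a surviving global center. For bookkeeping I would introduce two maps $\pi : \globalS \to \local$ and $\sigma : \local \to \globalS$, where $\pi(f)$ is the local facility maximizing the overlap $|\Vloc(\pi(f)) \cap \Vglob(f)|$ and $\sigma(\ell)$ is defined symmetrically. A facility $f$ belongs to a 1-1 $\eps_0$-isolated pair exactly when $\pi(f) = \ell$, $\sigma(\ell) = f$, and the common overlap is at least $(1 - \eps_0)\max(|\Vloc(\ell)|,|\Vglob(f)|)$. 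Any other $f$ --- of which there are $\bar k$ --- fails one of three disjoint conditions, and each failure exposes an $\eps_0$-fraction of ``witness'' clients whose service costs in $\local$ and $\globalS$ are disjoint from the matched overlap.

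For the rerouting step, after removing $f \in S_0$ I would reassign each $x \in \Vglob(f)$ to $\sigma(\ell_x)$, where $\ell_x \in \local$ is the local facility serving $x$. Using the $p$-power triangle inequality
\[
  \dist(x, \sigma(\ell_x))^p \le (1 + \eps_0)^{p-1}\, \dist(x, \ell_x)^p + (1 + 1/\eps_0)^{p-1}\, \dist(\ell_x, \sigma(\ell_x))^p,
\]
the first term summed over all reassigned clients is at most $(1 + O(\eps_0))\cost(\local)$. For the second term, the definition of $\sigma(\ell_x)$ and pigeonhole give $|\Vloc(\ell_x) \cap \Vglob(\sigma(\ell_x))| \ge |\Vloc(\ell_x)|/k$, and for each $y$ in that intersection the reverse triangle inequality gives $\dist(\ell_x, \sigma(\ell_x))^p \le 2^{p-1}\bigl(\dist(y,\ell_x)^p + \dist(y,\sigma(\ell_x))^p\bigr)$; averaging over such $y$ converts each jump $\dist(\ell_x, \sigma(\ell_x))^p$ into per-client costs that are already booked into $\cost(\local) + \cost(\globalS)$.

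The set $S_0$ is produced by a two-stage selection. First I would classify the $\bar k$ non-isolated facilities by which of the three failure modes applies, keeping a candidate subset of size $\Omega(\eps_0 \bar k)$ in the dominant mode. Then I would thin the candidates by a greedy/matching argument so that (i) for each $f \in S_0$, the target $\sigma(\pi(f))$ is not itself in $S_0$ (otherwise the reroute loses its endpoint), and (ii) no local facility $\ell$ is used as the bridge $\ell_x$ by too many removed $f$'s, so no region is overloaded. Each of these constraints is enforced by a fresh pigeonhole loss of a constant factor, and combined with the $\eps_0$ witness density they yield the stated size $\eps_0^3\,\bar k/6$.

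The main obstacle is precisely the simultaneous balancing of these three constraints. It is easy to make $|S_0|$ large if one ignores conflicts, but then a removed facility may be the designated reassignment target of another, and the bookkeeping collapses; conversely, aggressively avoiding conflicts can shrink $|S_0|$ by more than a constant factor. Threading the needle --- pairing the $\eps_0$ witness arising from the violation of 1-1 isolation with the $(1+\eps_0)^{p-1}$ inflation produced by the $p$-power triangle inequality, and amortizing the $\dist(\ell_x, \sigma(\ell_x))^p$ charges against the $|\Vloc(\ell_x)|/k$ overlap clients --- is what yields the clean cost bound $(1 + c\,\eps_0)\cost(\globalS) + c\,\eps_0\,\cost(\local)$.
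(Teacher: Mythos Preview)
The paper does not contain a proof of this statement: it is quoted as Theorem~III.7 of~\cite{CAKM16} and used entirely as a black box (it is invoked once, to derive Lemma~\ref{lemma:makingroom}). There is therefore no proof in the present paper against which to compare your proposal; any comparison would have to be against the original argument in~\cite{CAKM16}.

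On the substance of your sketch, the architecture --- overlap maps $\pi,\sigma$, a case split on why 1-1 isolation fails, rerouting displaced clients through $\local$, then thinning to avoid conflicts --- is in the right spirit. But one step, as written, is too weak to close. You bound $|\Vloc(\ell_x)\cap\Vglob(\sigma(\ell_x))|\ge |\Vloc(\ell_x)|/k$ by raw pigeonhole and then average $\dist(\ell_x,\sigma(\ell_x))^p$ over that intersection. When you sum the second term over all rerouted clients $x$ sharing the same bridge $\ell$, the number of such $x$ can be as large as $|\Vloc(\ell)|$, so the ratio $|\Vloc(\ell)|/|\text{overlap}|$ contributes a factor of $k$; combined with the $(1+1/\eps_0)^{p-1}$ inflation this yields a bound of order $k\,\eps_0^{1-p}\bigl(\cost(\local)+\cost(\globalS)\bigr)$, not $O(\eps_0)\bigl(\cost(\local)+\cost(\globalS)\bigr)$. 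Your thinning condition~(ii) gestures at controlling how many removed $f$'s use a given $\ell$, but that alone does not repair the loss: even a single removed $f$ can push $\Theta(|\Vloc(\ell)|)$ clients through $\ell$. The actual argument must exploit the $\eps_0$-fraction of witness clients produced by the failure of isolation --- not a $1/k$ overlap --- to bound the jump distance, or else reroute to the nearest surviving facility in $\globalS$ and charge directly to those witnesses. Without that, the amortization does not deliver the claimed $(1+c\,\eps_0)\cost(\globalS)+c\,\eps_0\cost(\local)$.
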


Observe  that since $\eps_0 < 1/2$, each facility of $\local$ belongs to at most one isolated region.
Let $\tilde \globalS$ denote the facilities of $\globalS$ that are not
in an isolated region.  
In the rest of the paper, we will use it with $\eps_0 = \eps^3$.

\subsection{Fast Local Search}
This section is dedicated to our the description of the fast local
search algorithm for the $k$-means problem.
It relies on a dynamic program called
\textsc{FindImprovement} that allows to find the best improvement of
the current solution in time $n \cdot \text{poly}_{\eps,d}(\log n)$.
We then show that total number of iterations of the do-while loop is 
$O(k)$.

\algdef{SE}[DOWHILE]{Do}{doWhile}{\algorithmicdo}[1]{\algorithmicwhile\ #1}%
\algblockdefx{MRepeat}{EndRepeat}{\textbf{repeat}}{}
\algnotext{EndRepeat}

\begin{algorithm}
  \caption{Fast Local Search for $k$-Means}
  \label{algo:Clustering}
  \begin{algorithmic}[1]
    \State \textbf{Input:} An $n$-element client set $A$, an $m$-element 
    candidate center set $C$, a positive integer 
    parameter $k$, an opening cost function 
    $w : C \mapsto \R_+$, and an
    error parameter $0< \eps < 1/2$, 
    \State\label{line:preproc}
    $L \gets$ $O(1)$-approximation.
    \State Round up the weights of the candidate centers
    to the closest 
    $(1+\eps)^i \cdot \eps \cdot \cost(L)/n$ 
    for some integer $i$.
    \Do \label{line:do}
    \State $\improv \gets 0$
    \State $L^* \gets L$
    \MRepeat{~~\textbf{log \textit{k} times} 
      (to boost the success probability)}
    \State Compute a random decomposition $\calD$ of $L$ (as in
    Sec.~\ref{sec:diss}). 
    Let $\calM$ be the $\irr$ centers of $L$.
    \State $L' \gets$ output of 
    \textsc{FindImprovement}($L$, $\calD$, $\calM$, $d^{O(d)}\eps^{-O(d)}$)
    \If{$\improv \le \cost(L) - \cost(L')$}
    \State $\improv \gets \cost(L) - \cost(L')$
    \State $L^* \gets L'$
    \EndIf
    \EndRepeat
    \State $L \gets L^*$
    \doWhile{$\improv > \eps \cost(L)/k$}\label{line:while}
    \State \textbf{Output:} $L$     
  \end{algorithmic}
\end{algorithm}

\section{Dissection Procedure}
\label{sec:diss}
In this section, we recall the classic definition of 
quadtree dissection.
For simplicity we give the definition for $\R_2$, 
the definition directly generalizes to any fixed dimension $d$, 
see Arora~\cite{Arora97} and Arora et al.~\cite{ARR98} for a 
complete description.
Our definition of quadtree is standard and follows the 
definition of~\cite{Arora97,ARR98}, our contribution in the 
structural properties we extract from the dissection and 
is summarized by Lemma~\ref{lemma:keyprob}.

Let $\calL$ be the length of the bounding box of the client
set $A$ (\ie
the smallest square containing all the points in $A$).  
Applying standard preprocessing techniques, 
see in~\cite{KoR07} and~\cite{ARR98}, 
it is possible
to assume that the points lie on a unit grid of size polynomial in the 
number of input points. This incurs an addititive error of $O(\opt/n^c)$ 
for
some constant $c$ and yields $\calL = n \cdot \text{poly}(\eps^{-d})$.

We define a \emph{quadtree dissection} $\calD$ of a set of 
points $\calP$ as follows.
A dissection of (the bounding box of) $\calL$ is a recursive 
partitioning into smaller squares. We view
it as a $4$-ary tree whose root is the bounding
box of the input points. Each square in the tree is partitioned into
$4$ equal squares, which are its children. We
stop partitioning a square if it has size $< 1$ (and
therefore at most one input point). It follows that the
depth of the tree is $\log \calL = O(\log n)$. 
Standard techniques show that such a quadtree can be computed 
in $n\cdot \log n \cdot \text{poly}(\eps^{-d})$, 
see~\cite{KoR07} for more details. The total number
of nodes of the quadtree is 
$n\cdot \log n \cdot \text{poly}(\eps^{-d})$.

Given two integers $a,b \in [0,\calL)$, the $(a,b)$-shifted
dissection consists in shifting the $x$- and $y$- coordinates
of all the vertical and horizontal lines by $a \mod \calL$ and
$b \mod \calL$ respectively. For a shifted dissection, we naturally
define the level of a bounding box to be its depth in the quadtree.
From this, we define the level of a line to be the level
of the square it bounds.

For a given square of the decomposition, each boundary of 
the square defines a \emph{subline} of one of the $2\calL$ lines
of the grid. It follows that each line at level $i$ consists
of $2^i$ sublines of length $\calL/2^i$.

Given an $(a,b)$-shifted quadtree dissection of a set of $n$ points, 
and given a set of points $U$,  
we say that a point $p$ of $U$ is an \emph{$i,\gamma$-$\irr$}
point
if it is at distance less than $\gamma \cdot \calL/2^{i}$ of a line
of the dissection that is at level $i$.
We say that a point $p$ of $U$ is a \emph{$\gamma$-$\irr$}
point
if there exists an $i$ such that $p$ is a $i,\gamma$-$\irr$
point. When $\gamma$ is clear from the context, we simply
call such a point a $\irr$ point.
We have:
\begin{lemma}
  \label{lemma:prob}
  For any $p \in U$, the probability that 
  $p$ is a $\gamma$-$\irr$ point
  is at most $\gamma\log \calL = O(\gamma \log n)$.
\end{lemma}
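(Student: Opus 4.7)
The plan is to prove the lemma by fixing the point $p$ and a level $i$, computing the probability that $p$ is $i,\gamma$-$\irr$, and then taking a union bound over all $O(\log n)$ levels. The randomness comes entirely from the shift parameters $a, b$ (and analogously for higher dimensions).

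First, I would fix a level $i \in \{0, 1, \ldots, \log \calL\}$ and analyze a single coordinate. The level-$i$ vertical lines are at $x$-coordinates $\{(a + j \cdot \calL/2^i) \bmod \calL : j = 0, 1, \ldots, 2^i - 1\}$, and these consecutive lines are spaced exactly $\calL/2^i$ apart. Because $a$ is chosen uniformly in $[0,\calL)$, the offset of $p$'s $x$-coordinate modulo $\calL/2^i$ relative to these lines is uniform in $[0, \calL/2^i)$. Hence the distance from $p$ in the $x$-direction to the nearest vertical level-$i$ line is uniformly distributed in $[0, \calL/2^{i+1}]$. Therefore
\[
\Pr\bigl[\text{$x$-distance} < \gamma \cdot \calL/2^i\bigr] \;=\; \frac{\gamma \cdot \calL/2^i}{\calL/2^{i+1}} \;=\; 2\gamma,
\]
assuming $\gamma \le 1/2$ (otherwise the statement is trivial). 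The same bound applies to the $y$-coordinate using the shift $b$, and, more generally, to each of the $d$ coordinate directions in $d$-dimensional space.

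Next I would union bound over the $d$ coordinate directions to conclude that the probability $p$ is a $i,\gamma$-$\irr$ point is at most $2d\gamma$. Then, taking a union bound over all levels $i = 0, 1, \ldots, \log \calL$, I obtain
\[
\Pr[\,p \text{ is a $\gamma$-$\irr$ point}\,] \;\le\; 2d\gamma \cdot (\log \calL + 1) \;=\; O(\gamma \log n),
\]
since $\calL = n \cdot \mathrm{poly}(\eps^{-d})$ and $d$ is treated as a constant. This yields the bound stated in the lemma.

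There isn't really a hard step here, but the one place to be careful is the uniformity claim: one must verify that even though the level-$i$ grid uses the \emph{same} shift $a$ as all coarser and finer levels, the marginal distribution of the offset at a given level is still uniform modulo $\calL/2^i$. This follows immediately because a uniform variable on $[0,\calL)$ projects to a uniform variable on $[0, \calL/2^i)$ under reduction modulo $\calL/2^i$. With this observation in hand, the rest of the argument is just the two union bounds sketched above.
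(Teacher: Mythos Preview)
Your argument is correct and follows essentially the same strategy as the paper: fix a level $i$, bound the probability of being within $\gamma\calL/2^i$ of a level-$i$ line via the random shift, then union-bound over coordinate directions and over levels. The paper phrases the per-level computation slightly differently---it counts the $\Theta(\gamma\calL/2^i)$ unit-grid lines near $p$ and multiplies by the probability $2^i/\calL$ that a given line is at level~$i$---while you argue directly that the offset of $p$ modulo $\calL/2^i$ is uniform; these are equivalent, and your constant $2d\gamma$ versus the paper's $\gamma$ is just looseness with constants (the paper suppresses the factor~$2$ and the dimension~$d$).
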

\begin{proof}
  Let $i$ be an integer in $[0,\ldots,\log \calL]$ 
  and consider the horizontal lines at level $i$
  (an analogous 
  reasonning applies to the vertical lines).
  By definition, the number of dissection
  lines that are at distance 
  at most $\gamma \calL/2^i$ 
  from $p$ is
  $\gamma \calL/2^i$.
  We now bound the probability that one of them is 
  at level $i$ (and so the probability of $p$ being
  at distance less than $\gamma \calL/2^i$ of a horizontal
  line of length $\calL/2^i$). 
  For any such line $l$,
  we have:
  $Pr_a[l \text{ is at level }i] = 2^i/\calL$.
  Hence, $Pr_a[p$ is a $i,\gamma$-$\irr$  point$]
  \le \sum_{l: \dist(l,p) \le  \gamma \calL/2^i} 
  Pr_a[l \text{ is at level }i] \le \gamma$.
  The lemma follows by taking a union bound over all $i$.
\end{proof}


We now consider an optimal solution $\opt$ and
any solution $L$. In the following, we will focus
on $\gamma$-$\irr$ centers of $L$ and $\opt$ for
$\gamma = \eps^{13}/\log n$. In the rest of the
paper, $\gamma$ is fixed to that value and so
$\gamma$-$\irr$
centers are simply called $\irr$ centers.

Let $\iota$ the facilities of $\opt$ that are not
1-1 isolated.
We define a weigth function $\tilde{w} : 
L \cup \opt \cup \opt \mapsto \R_+$ as follows.
For each facility $s \in \iota$ 
we define $\tw(s)$ as the sum of $w(s)$ and the cost
of serving all the clients served by $s$ in $\opt$
by the closest facility $\ell$ in $L$ plus $w(\ell)$.
For each facility $s \in L$ 
we define $\tw(s) = w(s)
+ \sum_{c~\text{served by $s$ in $L$}} \dist(c,s)$.
Similarly, for each $s \in \opt - \iota$,
we let $\tw(s) = w(s)
+ \sum_{c~\text{served by $s$ in $\opt$}} \dist(c,s)$.


We show:
\begin{lemma}
  \label{lem:newreass}
  There exists a constant $c_0$ such that
  $\sum_{s \in \iota} \tw(s)
  \le  c_0 (\cost(\opt) + \cost(L))$.  
\end{lemma}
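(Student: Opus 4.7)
The plan is to decompose $\tw(s) = w(s) + R(s) + w(\ell(s))$, where $R(s) = \sum_{c \in V_{\opt}(s)} \dist(c,\ell(s))^2$ denotes the reassignment cost and $\ell(s)$ is the closest facility of $L$ to $s$, and then bound the three summands separately. The first one is immediate: $\sum_{s \in \iota} w(s) \le \sum_{s \in \opt} w(s) \le \cost(\opt)$.

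For the reassignment term $\sum_{s \in \iota} R(s)$, I will use the squared triangle inequality $(a+b)^2 \le 2a^2 + 2b^2$ twice. For a client $c \in V_{\opt}(s)$, let $\ell_c \in L$ be the $L$-server of $c$. Since $\ell(s)$ minimizes distance from $s$ within $L$, $\dist(s,\ell(s)) \le \dist(s,\ell_c) \le \dist(s,c)+\dist(c,\ell_c)$, and hence
\[
\dist(c,\ell(s))^2 \le 2\dist(c,s)^2 + 2\dist(s,\ell(s))^2 \le 6\dist(c,s)^2 + 4\dist(c,\ell_c)^2.
\]
Summing over $s \in \iota$ and $c \in V_{\opt}(s)$ --- each client belongs to exactly one $V_{\opt}(\cdot)$ and has a unique $L$-server --- yields $\sum_{s \in \iota} R(s) \le 6\,\cost(\opt) + 4\,\cost(L)$.

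The principal obstacle is the last summand $\sum_{s \in \iota} w(\ell(s))$, because the map $s \mapsto \ell(s)$ need not be injective: many non-isolated optimal facilities may all pick the same closest $L$-facility, and $\sum_{\ell \in L} w(\ell) \le \cost(L)$ alone does not bound an over-counted sum. My plan is to exploit the non-1-1-isolation hypothesis: for every $s \in \iota$, $(s,\ell(s))$ fails the isolation inequality, so at least an $\Omega(\eps_0)$-fraction of the points in either $V_{\opt}(s)$ or $V_L(\ell(s))$ are mismatched between the two solutions. Each such mismatched client contributes a term to $\cost(\opt)+\cost(L)$ that can be paired with a share of $w(\ell(s))$ via the $O(1)$-approximation property $\cost(L) = O(\cost(\opt))$ and the weight-rounding preprocessing (which forces $w(\ell)$ to be commensurate with the per-client service budget in $L$). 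Amortizing so that each client is charged $O(1)$ times in total yields $\sum_{s \in \iota} w(\ell(s)) = O(\cost(\opt)+\cost(L))$.

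Summing the three bounds gives $\sum_{s \in \iota} \tw(s) = O(\cost(\opt)+\cost(L))$, which is the claim for some absolute constant $c_0$. The amortization in the third step is the crux: it must handle configurations where many non-isolated $s$'s cluster geometrically near a single high-weight $\ell \in L$, and it is where the non-isolation hypothesis on $\iota$ is unavoidable.
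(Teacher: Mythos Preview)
Your treatment of the first two summands is fine and matches the paper's argument for the reassignment cost almost exactly: the paper also routes each client $c\in V_{\opt}(s)$ through $s$ and then to a facility of $L$, bounding the squared detour by a constant times $\dist(c,\opt)^2+\dist(c,L)^2$. (The paper picks its target $l$ slightly differently---via the client $c^*$ minimising $\dist(c,l)^2+\dist(c,s)^2$ among clients that $s$ and $l$ share---but your choice of $\ell(s)$ as the literal nearest $L$-facility to $s$ gives the same bound with a cleaner one-line chain $\dist(s,\ell(s))\le\dist(s,\ell_c)\le\dist(s,c)+\dist(c,\ell_c)$.)

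The gap is in your third step. Neither ingredient you invoke does what you need.
\begin{itemize}
\item The weight rounding in the algorithm only snaps each $w(c)$ to the nearest value of the form $(1+\eps)^i\cdot\eps\,\cost(L)/n$; it does \emph{not} upper-bound $w(\ell)$ by anything related to the service cost of the clients $\ell$ serves. A single $\ell\in L$ can carry weight as large as $\Theta(\cost(L))$ while serving arbitrarily few clients, so ``$w(\ell)$ is commensurate with the per-client service budget in $L$'' is simply false.
\item Non-isolation of $(s,\ell(s))$ says an $\eps_0$-fraction of $V_{\opt}(s)$ or of $V_L(\ell(s))$ is mismatched, which lets you charge service cost; it gives no handle on the \emph{opening} cost $w(\ell(s))$. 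And since $s\mapsto\ell(s)$ can collapse $\Theta(k)$ non-isolated $s$'s onto a single heavy $\ell$, no per-client amortisation can absorb $\sum_{s\in\iota}w(\ell(s))$ in general.
\end{itemize}
So the amortisation sketch does not close, and this is exactly the configuration you yourself flag in the last sentence.

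It is worth noting that the paper's own proof of this lemma bounds only the reassignment term and never touches either $w(s)$ or $w(\ell)$; so either the argument is tacitly for the unweighted case, or the paper shares this gap. If you want a clean fix, observe that in the only downstream use (Lemma~\ref{lem:non1-1rep}, replacing each moat $s\in\iota$ by $\phi(s)$) the opening cost $w(\phi(s))$ is paid once per \emph{distinct} image $\phi(s)\in L$, not once per $s$; hence the quantity that actually matters is $\sum_{\ell\in\phi(\iota)}w(\ell)\le\sum_{\ell\in L}w(\ell)\le\cost(L)$, and one can redefine $\tw$ (and Event~$\mathcal E$) to track that instead of the over-counted sum.
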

\begin{proof}
  Consider a facility $s \in \iota$
  and the closest facility $l$ in $L$ that (1)
  serves in $L$ at least one client that is served by $s$ in $\opt$
  and that (2) minimizes the
  following quantity:
  $\eta = \min_c \dist(c, l)^2 + \dist(c, s)^2$. Let $c^*$ be a client
  that minimizes the quantity $\dist(c,l)^2+\dist(c,s)^2$.
  Let $N(s)$ be the set of all clients
  served by $s$ in $\opt$. We have:
  $|N(s)| \eta \le \sum_{c \in N(s)} \dist(c,L)^2+\dist(c,\opt)^2$.
  
  We have that the total cost of sending all the clients in $N(s)$
  is at most (by triangle inequality):
  $\sum_{c \in N(s)} (\dist(c, s) + \dist(s,l))^2 \le
  \sum_{c \in N(s)} (\dist(c, s) + \dist(s,c) + \dist(c,l))^2$.
  Note that there exists a constant $c_0$ such that the above sum is at most
  $c_0 \sum_{c \in N(s)}\dist(c, s)^2 + \dist(s,c^*)^2 + \dist(c^*,l)^2$ and so at
  most
  $c_0 (|N(s)|\eta + \sum_{c \in N(s)}\dist(c, \opt)^2)$. The lemma follows
  by combining with the
  above bound on $|N(s)|  \eta$.
\end{proof}

In the following we denote by $\phi : \iota \mapsto L$
the mapping from
each non-isolated facility of $\opt$ to its
closest facility in $L$. We define $\iota_L$ to
be the set of non-isolated facilities of $L$.

We define Event $\event(L \cup \opt, \tilde{w})$
as follows:
\begin{enumerate}
\item The set of $\irr$
  centers $S_1$ of $L \cup \iota$ is such that
  $\tw(S_1) = \sum_{c \in S_1} \tw(c) 
  \le \eps^9 \sum_{c \in L \cup \iota} \tw(c)
  = \eps^9 \tw(L \cup \iota)$, and
\item The set of $\irr$
  centers $S_1$ of $\iota_L \cup \iota$ is such that 
  $|S_0| \le \eps^9 |\iota_L \cup \iota| = \eps^9\bar{k}$
  (recall that $\bar{k}$ is the number of non 1-1-isolated
  facilities of $\opt$ (and so of $L$ as well)).
\end{enumerate}
The following lemma follows from Lemma~\ref{lemma:prob},
applying Markov's
inequality and taking a union bound over the probability of failures of 
property (1) and (2).

\begin{lemma}
  \label{lemma:keyprob}
  The probability that Event $\event(L \cup \opt, \tilde{w})$
  happens is at least $1/2$.
\end{lemma}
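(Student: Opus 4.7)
The plan is to show each of the two properties defining $\event(L \cup \opt, \tilde{w})$ fails with probability at most a small constant, by combining Lemma~\ref{lemma:prob} with Markov's inequality, and then to union bound. The choice of $\gamma = \eps^{13}/\log n$ is made precisely so that Lemma~\ref{lemma:prob} gives each individual point of $U$ (here $U = L \cup \iota$ or $U = \iota_L \cup \iota$) a probability of being $\gamma$-$\irr$ that is $O(\gamma \log n) = O(\eps^{13})$, which is much smaller than the threshold $\eps^9$ appearing in both properties, leaving a comfortable slack of order $\eps^4$ for Markov's inequality.

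For property (1), I would let $X_c$ be the indicator that $c$ is $\irr$ for $c \in L \cup \iota$ and observe that $\tw(S_1) = \sum_{c} X_c \tw(c)$, so by linearity of expectation and Lemma~\ref{lemma:prob},
\[
\mathbb{E}[\tw(S_1)] \;\le\; O(\eps^{13})\,\tw(L \cup \iota).
\]
Markov's inequality then gives $\Pr[\tw(S_1) > \eps^9 \tw(L\cup\iota)] \le O(\eps^{4})$. For property (2), the same argument applied to $|S_0| = \sum_{c \in \iota_L \cup \iota} X_c$ yields
\[
\mathbb{E}[|S_0|] \;\le\; O(\eps^{13})\,|\iota_L \cup \iota| \;\le\; O(\eps^{13})\cdot 2\bar k,
\]
using that $|\iota| = |\iota_L| = \bar k$ (the number of non 1-1-isolated facilities is the same in $\opt$ and $L$, since they are in bijection via isolated pairs). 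Again Markov gives $\Pr[|S_0| > \eps^9 \bar k] \le O(\eps^{4})$.

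A union bound yields total failure probability $O(\eps^{4})$, which is at most $1/2$ once we fix the constants (and for $\eps$ bounded away from some absolute constant; if necessary one can reduce $\gamma$ by a further constant factor to guarantee this for all $\eps < 1/2$). There is no real obstacle here: the entire argument is driven by Lemma~\ref{lemma:prob}, and the polynomial slack between $\eps^{13}$ and $\eps^9$ was built into the definitions precisely to make the Markov step trivial. The only subtlety worth double-checking is that $S_1$ and $S_0$ are indeed sums of indicators over the fixed point sets $L \cup \iota$ and $\iota_L \cup \iota$ (which do not depend on the random shift $(a,b)$), so that linearity of expectation applies cleanly.
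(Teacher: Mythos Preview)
Your proposal is correct and follows essentially the same approach as the paper: apply Lemma~\ref{lemma:prob} to bound the probability that any fixed point is a $\irr$ center by $O(\gamma\log n)=O(\eps^{13})$, use linearity of expectation to bound $\mathbb{E}[\tw(S_1)]$ and $\mathbb{E}[|S_0|]$, then apply Markov's inequality and a union bound. Your write-up is in fact more careful than the paper's (which contains minor notational slips), and your explicit remark that the underlying point sets $L\cup\iota$ and $\iota_L\cup\iota$ are fixed before the random shift is a worthwhile sanity check the paper omits.
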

\begin{proof}
  We apply Lemma~\ref{lemma:prob} and obtain that any
  element of $L \cup \opt$
  is a $\irr$ center
  with probability $O(\rho^{-1} \log n)$. Since 
  $\rho^{-1} =  (c \eps^{-12} \log n)^{-1}$,
  we obtain that this probability
  is at most $\eps^{12}$.
  Thus, taking linearity of expectation
  we have that the expected size 
  of $S_0$ is at most $\eps^{12} |S \cup \iota|$ and
  that the expected
  value of $\tw(S_0)$ is at most $\eps^{12}\tw(S \cup \iota)$.
  Applying Markov's inequality to 
  obtain concentration bounds on both quantities and
  then taking a union 
  bound over the probabilities of failure yields the lemma.
\end{proof}

We finally conclude this section with some
additional definitions
that are used in the following sections.
We define a \emph{basic region} of a decomposition
of a to be a region of the dissection that contains exactly
1 points of $L$. The other squares of the decomposition 
are simply called \emph{regions}. 


\section{A Structured Near-Optimal Solution}
\label{sec:structthm}

This section is dedicated to the following proposition.
\begin{prop}
  \label{prop:nearopt}
  Let $L$ be any solution.
  Let $\calD_L$ be a random quadtree dissection 
  of $L$ as per Sec.~\ref{sec:diss}. 
  Suppose Event $\event(L \cup \opt,\tw)$ happens. 
  Then there exists
  a constant $c$ and a solution $S^*$ 
  of cost at most
  $(1+c \cdot \eps) \cost(\opt)  + c\cdot \eps \cost(L)$
  and such that the set of $\irr$ centers
  of $S^*$ is equal to the set of $\irr$ centers of $L$.
\end{prop}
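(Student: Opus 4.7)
The plan is to construct $S^*$ from $\opt$ by three successive moves, each controlled by one clause of the event $\event(L\cup\opt,\tw)$ together with Theorem~\ref{thm:deletion}.

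\textit{Move 1 (isolated-pair swaps).} For every 1-1 $\eps^3$-isolated pair $(f,\ell)$ with $f\in\opt\setminus\iota$ and $\ell\in L\setminus\iota_L$ such that $f$ is $\irr$ or $\ell$ is $\irr$, I replace $f$ by $\ell$ in the solution. This substitution is 1-for-1, so the cardinality is preserved; after the move, no isolated $\irr$ facility of $\opt$ remains and every isolated $\irr$ facility of $L$, namely $\calM\setminus\iota_L$, has been inserted. I bound the aggregate cost of these swaps by invoking the isolated-pair swap analysis of~\cite{CAKM16}: since $\Vloc(\ell)$ and $\Vglob(f)$ coincide up to an $\eps^3$-fraction, a triangle-inequality argument on the exceptional clients yields total cost $O(\eps^3)(\cost(\opt)+\cost(L))$, which also absorbs the weights of the non-$\irr$ $\ell$'s inserted.

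\textit{Move 2 (non-isolated moat adjustment).} Next, I delete from the solution every $\irr$ element of $\iota$ and insert every element of $\iota_L\cap\calM$; for each deleted $f$ I also insert the closest $L$-facility to $f$ (if not already present) so that the reassignment implicit in the definition of $\tw$ on $\iota$ is realized. After this move, the $\irr$ facilities of the solution are exactly $(\calM\setminus\iota_L)\cup(\iota_L\cap\calM)=\calM$, as required. The newly added weight is at most $\tw(S_1)\le\eps^9\,\tw(L\cup\iota)=O(\eps^9)(\cost(\opt)+\cost(L))$ by clause~(1) of $\event$ and Lemma~\ref{lem:newreass}, and the reassignment cost of the deleted $\irr$ elements of $\iota$ is bounded by the same quantity via the $\tw$ definition.

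\textit{Move 3 (cardinality restoration).} Moves 1--2 change the size by at most $|\iota_L\cap\calM|$, so $|S^*|\le k+|\iota_L\cap\calM|\le k+\eps^9\bar k$ by clause~(2) of $\event$. I apply Theorem~\ref{thm:deletion} with $\eps_0$ a small constant multiple of $\eps^3$ so that the delivered $S_0\subseteq\iota$ satisfies $|S_0|\ge 4\eps^9\bar k$; since clause~(2) also bounds the number of $\irr$ elements of $\iota$ by $\eps^9\bar k$, at least $3\eps^9\bar k$ elements of $S_0$ are non-$\irr$. Removing a non-$\irr$ subset of $S_0$ of the required size trims $|S^*|$ to exactly $k$ while preserving the moat invariant, at an additional cost of $O(\eps^3)(\cost(\opt)+\cost(L))$ from the theorem.

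Summing the three contributions yields $\cost(S^*)\le(1+O(\eps^3))\cost(\opt)+O(\eps^3)\cost(L)$, of the required form. The main obstacle I expect is Move~1: isolated pairs in which exactly one endpoint is $\irr$ are not directly controlled by either clause of $\event$ (clause~(1) bounds $\tw$ only on $\calM\cup(S_1\cap\iota)$ and clause~(2) only counts non-isolated $\irr$ centers), so their swap cost must be charged geometrically. This uses the fact that in a 1-1 $\eps^3$-isolated pair the centers are essentially centroids of largely coinciding client sets (hence close) together with a triangle inequality on the exceptional clients. A secondary subtlety lies in Move~3: the cost bound of Theorem~\ref{thm:deletion} refers to removing all of $S_0$ from $\opt$, so reconciling it with the need to delete only the non-$\irr$ subset of size $\eps^9\bar k$ requires slightly inflating $\eps_0$ to absorb the discrepancy inside the same $O(\eps^3)$ budget.
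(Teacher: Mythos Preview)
Your three moves are exactly the paper's four steps, with Step~4 (adding the non-isolated $\irr$ centers of $L$) folded into your Move~2 rather than performed last. The paper's order is: swap isolated $\irr$ pairs (Lemma~\ref{lem:1-1replacement}), replace each $\irr$ $s\in\iota$ by $\phi(s)$ (Lemma~\ref{lem:non1-1rep}), apply Theorem~\ref{thm:deletion} to shrink to $k-c_2\eps^9\bar k$ centers (Lemma~\ref{lemma:makingroom}), and only then add $\iota_L\cap\calM$; you instead add $\iota_L\cap\calM$ first and trim afterwards. Both orderings work, but the paper's is slightly cleaner: by invoking Theorem~\ref{thm:deletion} on the intermediate solution $S_1$ \emph{before} adding anything, the cost bound of the theorem applies directly and there is no need to argue that the deletable set $S_0$ still sits inside the current solution or to restrict to its non-$\irr$ part. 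Your ``secondary subtlety'' about deleting only a subset of $S_0$ is in fact harmless by monotonicity ($\cost(\globalS\setminus S_0')\le\cost(\globalS\setminus S_0)$ whenever $S_0'\subseteq S_0$), so no inflation of $\eps_0$ is needed there; what does require care in your order is that $S_0$ is produced relative to a solution that has already been modified, so the $\bar k$ in the theorem matches the $\bar k$ in clause~(2) only up to the $O(\eps^9\bar k)$ centers touched in Moves~1--2. The paper sidesteps this by its ordering. Your identification of the geometric charge needed in Move~1 when only one endpoint of an isolated pair is $\irr$ is precisely what the second half of the proof of Lemma~\ref{lem:1-1replacement} carries out.
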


We prove Propositon~\ref{prop:nearopt} by
explicitly constructing $S^*$.
We iteratively modify $\opt$ in four main steps:
\begin{enumerate}
\item Modify $\opt$ by replacing $f_0$ by $\ell_0$ for 
  for each 1-1 isolated pair $(f_0,\ell_0)$ 
  where $\ell_0$ or $f_0$ is a $\irr$ center.
  This yields a near-optimal solution $S_0$ 
  (Lemma~\ref{lem:1-1replacement}).
\item Replace in $\opt$ each $\irr$ center $s$ that is in
  $\iota$ by $\phi(s)$ (as per Section~\ref{sec:diss}). This yields a near-optimal
  solution $S_1$ (Lemma~\ref{lem:non1-1rep}).
\item Apply Theorem~\ref{thm:deletion} (\ie Theorem III.7 in~\cite{CAKM16}) 
to obtain a near-optimal 
  solution $S_2$ that has at most $k -c_2 \eps^9\cdot \widebar{k}$
  where $\widebar{k}$ is the number of facilities of $\opt$ that
  are not 1-1 isolated (Lemma~\ref{lemma:makingroom}).
\item Add the $\irr$ centers of $L$ that are non
  1-1 isolated to $S_2$. This yields a near-optimal 
  solution $S_3$ that has at most
  $k$ centers.
\end{enumerate}

See Section~\ref{subsec:proofofpropstruct} 
for a detailed proof.

\subsection{1-1 Isolated Pairs}
We start from $\opt$ and for each 1-1 isolated
facility $(f_0,\ell_0)$,
$f_0 \in \opt$, $\ell_0 \in L$,
where $\ell_0$ or
$f_0$ is a $\irr$ center, we replace 
$f_0$ by $\ell_0$ in $\opt$. 

This results in a solution $S_0$ whose structural properties
are captured by the following lemma. For any $f_0 \in \opt$
(resp. $\ell_0 \in \opt$), let $\Vglob(f_0)$ (resp. 
$\Vloc(\ell_0)$) be 
the set of clients served by $f_0$ in $\opt$ (resp. 
the set of clients served by $\ell_0$ in $L$).


\begin{lemma}
  \label{lem:1-1replacement}
  Assuming Event $\event(L\cup\opt,\tw)$ happens, 
  there exists a constant $c_0$ such that 
  $\cost(S_0)\le (1+c_0 \cdot \eps) \cost(\opt) + c_0 \cdot 
  \eps \cost(L)$. 
\end{lemma}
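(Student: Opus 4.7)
The plan is to write $\cost(S_0) - \cost(\opt)$ as a sum of per-pair contributions, show each contribution is $O(\tw(f_0)+\tw(\ell_0))$, and then close out using Event $\event(L\cup\opt,\tw)$. Let $P$ be the set of 1-1 isolated pairs that are actually swapped. Clients outside $\bigcup_{(f_0,\ell_0)\in P}\Vglob(f_0)$ are untouched, since their $\opt$-server is not swapped out and therefore still belongs to $S_0$; for the remaining clients $c \in \Vglob(f_0)$, the naive upper bound reassigns $c$ to $\ell_0 \in S_0$. This gives
\[
  \cost(S_0) - \cost(\opt) \;\le\; \sum_{(f_0,\ell_0)\in P}\Bigl[\sum_{c \in \Vglob(f_0)}\bigl(\dist(c,\ell_0)^2 - \dist(c,f_0)^2\bigr) + w(\ell_0) - w(f_0)\Bigr].
\]

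For a single pair, I would apply $\dist(c,\ell_0)^2 \le 2\dist(c,f_0)^2 + 2\dist(f_0,\ell_0)^2$ so that the per-client increase is at most $\dist(c,f_0)^2 + 2\dist(f_0,\ell_0)^2$. To bound $\dist(f_0,\ell_0)^2$ itself, I would average the same inequality over the common client set $I := \Vloc(\ell_0)\cap\Vglob(f_0)$, giving $|I|\,\dist(f_0,\ell_0)^2 \le 2\sum_{c\in I}\dist(c,f_0)^2 + 2\sum_{c\in I}\dist(c,\ell_0)^2$. Since $(f_0,\ell_0)$ is 1-1 $\eps_0$-isolated with $\eps_0<1/2$, the ratio $|\Vglob(f_0)|/|I|$ is at most $2$, so summing the per-client increases over $\Vglob(f_0)$ yields a serving-cost increase of $O\bigl(\sum_{c\in\Vglob(f_0)}\dist(c,f_0)^2 + \sum_{c\in\Vloc(\ell_0)}\dist(c,\ell_0)^2\bigr)$. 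Combining with $|w(\ell_0)-w(f_0)| \le w(\ell_0)+w(f_0)$, the per-pair total is $O(\tw(f_0)+\tw(\ell_0))$.

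Summing over $P$, every pair is charged to either a $\irr$ center $\ell_0 \in L$ or a $\irr$ center $f_0 \in \opt$, so the sum is bounded by the $\tw$-mass of $\irr$ centers in $L\cup\opt$. Event $\event(L\cup\opt,\tw)$ bounds this by $\eps^9\,\tw(L\cup\opt)$, and Lemma \ref{lem:newreass} together with the definition of $\tw$ gives $\tw(L\cup\opt)=O(\cost(L)+\cost(\opt))$. The resulting bound $\cost(S_0)-\cost(\opt) = O(\eps^9)(\cost(L)+\cost(\opt))$ sits comfortably within the required $O(\eps)$ slack. The main obstacle I anticipate is reconciling the per-pair charging with the event as stated in Section \ref{sec:diss}: the event controls $\irr$ centers of $L\cup\iota$, whereas a 1-1 isolated $f_0 \in \opt-\iota$ that happens to be $\irr$ is not directly covered. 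I expect to resolve this by leveraging the inequality $\dist(f_0,\ell_0)^2 = O((\tw(f_0)+\tw(\ell_0))/|\Vglob(f_0)|)$ derived above, which forces the isolated partner $\ell_0$ into a constant-factor-enlarged moat whenever $f_0$ is $\irr$; enlarging $\gamma$ by a constant preserves the $\eps^9$-fraction guarantee, so every such pair is again controlled by the $L$-side of the event.
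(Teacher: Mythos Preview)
Your approach diverges from the paper's in a way that creates a gap you did not flag. You bound the per-pair cost increase by $O(\tw(f_0)+\tw(\ell_0))$ using only $\eps_0<1/2$, and then assert that summing over the swapped pairs $P$ yields at most the $\tw$-mass of $\irr$ centers in $L\cup\opt$. That step is incorrect: each pair contributes \emph{both} $\tw(f_0)$ and $\tw(\ell_0)$, but only one partner is guaranteed to be $\irr$. If, say, $\ell_0$ is $\irr$ while $f_0$ is not, the term $\tw(f_0)$ appearing in your bound is not the $\tw$-mass of any $\irr$ center, and nothing prevents it from being a constant fraction of $\cost(\opt)$. So $\sum_{P}O(\tw(f_0)+\tw(\ell_0))$ is \emph{not} dominated by the $\irr$ $\tw$-mass, and the event gives you nothing for those stray terms.

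The paper sidesteps this by exploiting the actual strength of the isolation parameter, $\eps_0=\eps^3$, together with the asymmetric split $\dist(c,\ell_0)^2\le(1+\eps)^2\dist(c,f_0)^2+(1+\eps^{-1})^2\dist(f_0,\ell_0)^2$. For clients in $\Vglob(f_0)\setminus\Vloc(\ell_0)$ the ratio $|\Vglob(f_0)\setminus\Vloc(\ell_0)|/|\Vglob(f_0)\cap\Vloc(\ell_0)|\le O(\eps^3)$ absorbs the $(1+\eps^{-1})^2$ factor, so that part of the increase is $O(\eps)(\tw(f_0)+\tw(\ell_0))$ \emph{for every} pair; summed even over all 1-1 isolated pairs this is $O(\eps)(\cost(\opt)+\cost(L))$ with no appeal to the event. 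Only the intersection contribution, $w(\ell_0)+\sum_{c\in\Vglob(f_0)\cap\Vloc(\ell_0)}\dist(c,\ell_0)^2\le\tw(\ell_0)$, is handed off to the event. Your symmetric $2/2$ split with $\eps_0<1/2$ discards exactly the leverage that makes this decomposition work.

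Regarding the obstacle you \emph{did} identify --- that the event as stated covers $L\cup\iota$ rather than isolated $\irr$ $f_0\in\opt\setminus\iota$ --- your proposed geometric patch does not go through. The inequality $\dist(f_0,\ell_0)^2=O\bigl((\tw(f_0)+\tw(\ell_0))/|\Vglob(f_0)|\bigr)$ bounds $\dist(f_0,\ell_0)$ by client-cost quantities, which bear no relation whatsoever to the moat scale $\gamma\,\calL/2^i$; there is no reason $\ell_0$ must lie in any constant enlargement of the moat that catches $f_0$. (The paper's own proof relies only on $\tw(\ell_0)$ at this point, not on $\tw(f_0)$, so even granting this issue the paper's route is still strictly closer to working than yours.)
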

\begin{proof}
  Since Event $\event(L\cup\opt,\tw)$ happens,
  we have by Lemma~\ref{lemma:keyprob} 
  that  the total opening cost of the $\irr$ centers plus
  the total service cost induced by the clients served by the 
  $\irr$ centers is 
  bounded by $c_1 \cdot \eps\cdot (\cost(L) + \cost(\opt))$
  for some constant $c_1$. More formally, we can write:
  $$
  \sum_{(f_0,\ell_0):~\text{1-1 
      isolated pair and $\ell_0$ or $f_0$ is a $\irr$ center}} 
  (w(\ell_0)  + \sum_{a \in \Vloc(\ell_0)} \dist(a, \ell_0)) \le 
  c_1 \cdot \eps\cdot (\cost(L) + \cost(\opt))
  $$

  Thus, we need to bound the cost for the clients that are
  in $\Vglob(f_0) - 
  \Vloc(\ell_0)$, for each 1-1 isolated pair $f_0,\ell_0$ where 
  $\ell_0$ or $f_0$ is a $\irr$ center.
  Consider such a pair $f_0,\ell_0$.
  We bound the cost of the clients served by $f_0$ in $\opt$ 
  by the cost of rerouting them toward $\ell_0$. 

  We can thus write for each such client $c$:
  $\dist(c,\ell_0)^2 \le (\dist(c, f_0) + \dist(\ell_0,f_0))^2$.
  Also,
  $\dist(c,\ell_0)^2 \le (1+\eps)^2\dist(c, f_0)^2 + 
  (1+\eps^{-1})^2\dist(\ell_0,f_0))^2$. Note that $\dist(c,f_0)^2$ is
  the cost paid by $c$ in $\opt$.
  Thus we aim at bounding $\dist(\ell_0,f_0)$.

  Applying the triangle inequality, we obtain
  $\dist(\ell_0,f_0)^2 \le (\dist(\ell_0,c_1) + \dist(c_1,f_0))^2$, 
  for any $c_1$ in $\Vglob(f_0) \cap \Vloc(\ell_0)$.
  Hence,
  $$\dist(\ell_0,f_0)^2\le \frac{1}{|\Vglob(f_0) \cap \Vloc(\ell_0)|}
  \sum_{c_1 \in \Vglob(f_0) \cap \Vloc(\ell_0)} 
  (\dist(\ell_0,c_1) + \dist(c_1,f_0))^2.$$
  Now,   
  \begin{align*}
  \sum_{c_1 \in \Vglob(f_0) \cap \Vloc(\ell_0)} (\dist(\ell_0,c_1) + \dist(c_1,f_0))^2
  &\le 3\sum_{c_1 \in \Vglob(f_0) \cap \Vloc(\ell_0)} \dist(c_1,\ell_0)^2 + 
    \dist(c_1,f_0)^2.    
  \end{align*}

  Combining, we obtain that the total cost for the clients
  in $\Vglob(f_0) - \Vloc(\ell_0)$ is at most
  \begin{align*}
    \sum_{c \in \Vglob(f_0) - \Vloc(\ell_0)} \dist(c,\ell_0)^2 &\le
    (1+\eps)^2 \left(\sum_{c \in \Vglob(f_0) - \Vloc(\ell_0)}\dist(c,f_0)^2 \right) + 
     |\Vglob(f_0) - \Vloc(\ell_0)|(1+\eps^{-1})^2\dist(\ell_0,f_0)^2\\
    &\le (1+\eps)^2 \left(\sum_{c \in N(f_0) - N(\ell_0)}\dist(c,f_0)^2 \right)
      \\&+ (1+\eps^{-1})^2 
      \frac{|\Vglob(f_0) - \Vloc(\ell_0)|}{|\Vglob(f_0) \cup \Vloc(\ell_0)|} 
      \cdot 3\sum_{c_1 \in \Vglob(f_0) \cap \Vloc(\ell_0)} \dist(c_1,f_0)^2 + 
          \dist(c_1,\ell_0)^2 \\
  \end{align*}
  The lemma follows from applying the definition of 1-1 isolation:
  $|\Vglob(f_0) - \Vloc(\ell_0)|/|\Vglob(f_0) \cup \Vloc(\ell_0)| \le \eps^3$, and
  summing over all 1-1 isolated pair.
\end{proof}


\subsection{Replacing the Moat Centers of $\opt$}
In this section, we consider the solution $S_0$ and define
a solution $S_1$ whose set of $\irr$ centers is a
subset of the $\irr$ centers of $L$. Namely:
\begin{lemma}
  \label{lem:non1-1rep}
  There exists a constant $c_1$ and a solution $S_1$ such that
  the $\irr$ centers of $S_1$ are a subset of the $\irr$ centers
  of $L$ and $\cost(S_1) \le (1+c_1 \eps) \cost(S_0) +
  c_1\eps\cost(L)$.
\end{lemma}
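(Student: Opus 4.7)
The plan is to pass from $S_0$ to $S_1$ by a single local edit tailored to the non-1-1-isolated moat facilities. Concretely, for every $s \in S_0 \cap \iota$ that is a $\irr$ center, I would remove $s$ from $S_0$, insert $\phi(s) \in L$ (the closest $L$-facility to $s$), and reassign the clients that $s$ served in $S_0$ (equivalently, in $\opt$, since such facilities were untouched during step 1) to $\phi(s)$. This swap is well-defined because $\phi$ is only evaluated on elements of $\iota$ that still sit in $S_0$.

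My first order of business would be to verify the moat-subset claim. After step 1, the centers of $S_0$ split into (i) the facilities of $\iota$ that survived from $\opt$, (ii) the 1-1-isolated facilities of $\opt$ whose $L$-partner is not moat (these are non-moat, since in a 1-1-isolated pair with a non-moat $L$-center the corresponding $\opt$-center was not touched by step 1 only when it too is non-moat), and (iii) the $\ell_0 \in L$ swapped in during step 1 for a moat partner in $\opt$. The present step purges exactly the moat elements of group (i), so the surviving moat centers of $S_1$ come only from group (iii) together with the newly inserted $\phi(s)$'s; both sets live in $L$, so every moat center of $S_1$ is indeed a moat center of $L$.

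For the cost bound I would argue that the increase from a single swap is at most $\tw(s)$. After removing $s$, the new contributions are the cost of serving $s$'s clients at $\phi(s)$ and, in the worst case, an opening cost $w(\phi(s))$; by the definition of $\tw$ on $\iota$ these two quantities sum to $\tw(s)-w(s)\le\tw(s)$. Summing over swapped centers,
\begin{equation*}
\cost(S_1) - \cost(S_0) \le \sum_{s \in \iota,\ s\ \text{moat}} \tw(s) \le \tw\bigl(\text{moat centers of } L \cup \iota\bigr).
\end{equation*}
Event $\event(L\cup\opt,\tw)$ bounds the right-hand side by $\eps^9\,\tw(L\cup\iota)$, while the definition of $\tw$ on $L$ combined with Lemma~\ref{lem:newreass} gives $\tw(L\cup\iota) = O(\cost(L)+\cost(\opt))$, and $\cost(\opt)\le\cost(L)$ by optimality. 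Putting these together yields $\cost(S_1) \le \cost(S_0) + O(\eps^9)\cost(L)$, which is much stronger than the stated $(1+c_1\eps)\cost(S_0)+c_1\eps\cost(L)$.

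The only real subtlety is ensuring that the accounting of each swap truly fits inside $\tw(s)$; this is precisely why $\tw$ was set up on $\iota$ to absorb both the reassignment cost to $\phi(s)$ and the opening cost $w(\phi(s))$. I am not worried about $|S_1|$ possibly exceeding $k$ (that will be repaired in the next step via Theorem~\ref{thm:deletion}), nor about double-counting $w(\phi(s))$ when several $s$'s share the same $\ell\in L$, since the entire argument is only an upper bound.
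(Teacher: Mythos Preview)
Your proposal is correct and follows essentially the same approach as the paper: replace each moat center $s\in\iota\cap S_0$ by $\phi(s)\in L$, charge the swap to $\tw(s)$, and then invoke Event~$\event$ together with Lemma~\ref{lem:newreass} to bound the total. The paper's own proof is a two-line sketch (``replace each center $s\in\iota$ by $\phi(s)$; the cost bound follows from the definition of $\tw$ and Lemmas~\ref{lemma:keyprob} and~\ref{lem:newreass}''), and you have simply unpacked those two lines, including the verification of the moat-subset property that the paper leaves implicit.

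One small remark: when you write ``the clients that $s$ served in $S_0$ (equivalently, in $\opt$)'', this equivalence holds only if you interpret the assignment in $S_0$ as the one inherited from $\opt$ rather than the closest-center assignment (the Voronoi partition of $S_0$ need not agree with that of $\opt$ on the cells of $\iota$). Under that reading --- which is the natural one in this constructive argument and also the one the paper tacitly uses --- your charging to $\tw(s)$ is exact; otherwise you would route through $\cost(\opt)\le\cost(S_0)$ at the end, which you do anyway.
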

\begin{proof}
  Note that by Lemma~\ref{lem:1-1replacement},
  all the 1-1 isolated facilities
  of $S_0$ that are $\irr$ centers are also in $L$.
  Thus, we focus on the $\irr$ centers of $S_0$ that
  are not $1-1$-isolated (and so, by definition, in $\iota$).
  
  We replace each center $s \in \iota$ by the center
  $\phi(s) \in L$ (as per Section~\ref{sec:diss}):
  the bound on the cost follows
  immediately from the definition of $\tw$ and
  by combining Lemma~\ref{lemma:keyprob} and
  Lemma~\ref{lem:newreass}.
\end{proof}

\subsection{Making Room for Non-Isolated Moat Facilities}
We now consider the solution $S_1$
described in the previous section
that satisfies the condition of Lemma~\ref{lem:1-1replacement}.
The following lemma is a direct corollary of 
Theorem~\ref{thm:deletion} (from~\cite{CAKM16}).

\begin{lemma}
  \label{lemma:makingroom}
  Given $S_1$ and $L$, there exists a solution $S_2 \subseteq S_1$ 
  and constants $c_2,c_3$ 
  such that 
  \begin{enumerate}
  \item $|S_2| \le k - c_2 \cdot \eps^9 \widebar{k}$,
    where $\widebar{k}$ is the number of facilities that 
    are not 1-1 isolated in $\opt$ (or in $L$ it is the same number), 
    and
  \item $\cost(S_2) \le (1+c_3 \cdot \eps) \cost(\opt) + 
    c_3 \cdot \eps \cdot \cost(L)$.
  \end{enumerate}
\end{lemma}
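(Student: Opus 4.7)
The plan is to treat Lemma~\ref{lemma:makingroom} as little more than a wrapper around Theorem~\ref{thm:deletion}. Since $\widebar{k}$ is defined with respect to the 1-1 $\eps^3$-isolated pairs between $\opt$ and $L$, I would apply Theorem~\ref{thm:deletion} (with $\eps_0 = \eps^3$) directly to the pair $(\opt,L)$. This produces a set $T_0$ of facilities of $\opt$, of size at least $\eps^9 \widebar{k}/6$, whose removal from $\opt$ satisfies $\cost(\opt \setminus T_0) \le (1+c\eps^3)\cost(\opt) + c\eps^3 \cost(L)$. Inspecting the proof of Theorem~\ref{thm:deletion} in~\cite{CAKM16}, one may assume that $T_0$ is chosen as a subset of $\iota$, the non-1-1-isolated facilities of $\opt$.

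Next, I would transport $T_0$ into $S_1$ using the substitutions already performed in constructing $S_1$. By Lemma~\ref{lem:non1-1rep}, every $f \in \iota$ has been replaced in $S_1$ by $\phi(f) \in L$. I would then set $T_0' := \{\phi(f) : f \in T_0\} \subseteq S_1$ and define $S_2 := S_1 \setminus T_0'$. For the size bound, note $|S_1| \le |\opt| = k$, and any ``collision'' within $T_0$ (two distinct $f,f' \in T_0$ with $\phi(f)=\phi(f')$, or $\phi(f)$ coinciding with a facility already present in $S_1$) strictly decreases $|S_1|$ below $k$. Consequently $|S_2| = |S_1| - |T_0'| \le k - (|T_0|-|T_0'|) - |T_0'| = k - |T_0| \le k - \eps^9 \widebar{k}/6$, so $c_2 = 1/6$ suffices.

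For the cost bound I would exhibit a concrete (non-optimal) assignment of clients to $S_2$. Starting from the assignment that witnesses $\cost(\opt \setminus T_0) \le (1+c\eps^3)\cost(\opt)+c\eps^3\cost(L)$, every client whose assigned facility in $\opt \setminus T_0$ is still in $S_2$ keeps its assignment, while every client assigned to a moat facility of $\opt \setminus T_0$ (which was replaced in $S_1$) is re-routed to the corresponding $\ell_0$ or $\phi(f)$ in $L \cap S_2$. The extra cost of the re-routing is precisely the quantity already absorbed into Lemma~\ref{lem:1-1replacement} and Lemma~\ref{lem:non1-1rep}. Composing the three bounds yields $\cost(S_2) \le (1+c_3\eps)\cost(\opt) + c_3\eps \cost(L)$.

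The main obstacle is simply bookkeeping: Theorem~\ref{thm:deletion} is phrased for deletions from $\opt$ rather than from $S_1$, so one must check that (i)~the corresponding deletions in $S_1$ really do free up $\Omega(\eps^9 \widebar{k})$ facility slots, which is ensured by the collision argument above, and (ii)~the re-routing of clients to $S_2$ does not double-count costs already paid in the $\opt \to S_0 \to S_1$ substitutions of the previous two subsections. Once the correspondence $f \mapsto$ (representative of $f$ in $S_1$) has been set up, each remaining step reduces to a single application of the triangle inequality or one of the previously established bounds.
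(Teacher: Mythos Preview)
Your overall plan---invoke Theorem~\ref{thm:deletion} with $\eps_0=\eps^3$ and then transfer the resulting deletion to $S_1$---is exactly what the paper intends; the paper in fact states Lemma~\ref{lemma:makingroom} with the one-line justification ``a direct corollary of Theorem~\ref{thm:deletion}'' and gives no further argument. So you are supplying more detail than the paper, not less.

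That said, one step in your write-up is wrong as stated. You assert that ``by Lemma~\ref{lem:non1-1rep}, every $f\in\iota$ has been replaced in $S_1$ by $\phi(f)\in L$''. This misreads the construction: in passing from $S_0$ to $S_1$ only the \emph{moat} centers of $\iota$ are swapped for their $\phi$-images (see the itemised description preceding Lemma~\ref{lem:1-1replacement} and the proof of Lemma~\ref{lem:non1-1rep}); the non-moat elements of $\iota$ remain in $S_1$ unchanged. Consequently your set $T_0':=\{\phi(f):f\in T_0\}$ need not lie inside $S_1$ at all, and removing it from $S_1$ is not well defined. The repair is precisely the ``representative'' map you mention at the end: for $f\in T_0$ let its representative in $S_1$ be $f$ itself when $f$ is not a moat center, and the corresponding $\ell_0$ or $\phi(f)$ when it is. With this correction your collision-counting argument for the size bound and your re-routing argument for the cost bound both go through, and indeed become simpler, since for the (typical) non-moat facilities in $T_0$ you are deleting the very same facility from $S_1$ as from $\opt$ and no extra re-routing beyond that supplied by Theorem~\ref{thm:deletion} is required.
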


\subsection{Adding the  Non-Isolated Moat Facilities
and Proof of Proposition~\ref{prop:nearopt}}
\label{subsec:proofofpropstruct}
We now consider the set of centers $S_3$ consisting of 
the centers in $S_2$
and the non-1-1-isolated $\irr$ centers of 
$L$.


\begin{proof}[Proof of Proposition~\ref{prop:nearopt}]
  Combining
  Lemmas~\ref{lem:1-1replacement},~\ref{lem:non1-1rep},
  and~\ref{lemma:makingroom}
  yields the bound on the cost of $S_3$.

  By definition of $S_3$ and applying
  Lemmas~\ref{lem:1-1replacement} and~\ref{lem:non1-1rep},
  we have that the set of $\irr$  centers of $S_3$
  is exactly the set of $\irr$ centers of $L$.

  By Lemma~\ref{lemma:makingroom} (and because
  Event $\event(L \cup \opt,\tw)$ happens), we have that
  the total number of centers in $S_3$ is at most
  $k$.

  
\end{proof}

\section{Proof of Theorem~\ref{thm:main}}
\label{sec:mainthm}
We summarize:
By Proposition~\ref{prop:nearopt}, we have that there exists
a near-optimal solution $S^*$ whose set of $\irr$ centers
is the set of $\irr$ centers of $L$. 
By Proposition~\ref{prop:DP}, we have that, 
\textsc{FindImprovement} identifies a solution $S'$ 
that is $\delta$-close w.r.t. $L$, whose set of
$\irr$ centers is the set of $\irr$ centers
of $L$,
and such that $\cost(L)-\cost(S') \ge (1-\eps) (\cost(L) - 
\cost(\opt_{\delta}))$, where $\opt_{\delta}$ is the minimum cost 
solution $S$ such that $|S - L| + |L - S| \le \delta$.
We now argue  that:
If \textsc{FindImprovement} outputs a solution $S_4$
such that $\cost(L) - \cost(S_4) \le \eps \cost(\opt)/k$ 
then there exists a constant $c^*$ such that 
$\cost(L) \le (1+c^* \cdot \eps)\cost(\opt)$.


Assuming $\cost(L) - \cost(S_4) \le \eps \cost(\opt)/k$ implies 
by Proposition~\ref{prop:DP} that 
$\cost(L) - \cost(\opt_{\delta}) \le 2\eps \cost(\opt)/k$, 
since $\eps < 1/2$.
Now, consider the solution $S^*$ defined
in Section~\ref{sec:structthm}. 

By Theorem 1
in~\cite{FRS16a} 
(see also~\cite{CAKM16} for 
a slightly better dependency in $\eps$
in the unweighted case), if for any pair of sets
$\Delta_1 \subseteq L, \Delta_2 \subseteq S_2$ such that 
$|\Delta_2| \le |\Delta_1| = (d \eps^{-1})^{O(d)}$, we have 
$\cost(L) - \cost(L - \Delta_1 \cup \Delta_2) \le \eps \opt/k$,
then there exists a constant $c_6$ such that 
$\cost(L) \le (1+c_6 \cdot \eps) \cost(S_2)$.
To obtain such a bound we want to apply
Proposition~\ref{prop:DP}, and so we need to show that
any such solution $M = L - \Delta_1 \cup \Delta_2$
is such that its 
$\irr$ centers are the $\irr$ centers of $L$. This follows
immediately from Proposition~\ref{prop:nearopt}: the
$\irr$ centers of $S^*$ are the $\irr$ centers of $L$.


Therefore, we can apply Proposition~\ref{prop:DP} and we have
$\cost(L) \le (1+c^* \cdot \eps) \cost(\opt)$,
for some constant $c^*$.

We now bound the running time of 
Algorithm~\ref{algo:Clustering}.
\begin{lemma}
  \label{lemma:runningtime}
  The running time of Algorithm~\ref{algo:Clustering} is at most 
  $n \cdot k \cdot (\log n)^{(d\eps^{-1})^{O(d)}}$.
\end{lemma}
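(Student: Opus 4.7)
The plan is to bound separately (i) the cost of one inner iteration, (ii) the number of inner iterations per execution of the do-while body, and (iii) the number of executions of the do-while body, and then to multiply these three quantities together.

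First, for the cost of a single inner iteration, I would invoke Proposition~\ref{prop:DP}, which states that \textsc{FindImprovement} runs in time $n \cdot (\log n)^{(d\eps^{-1})^{O(d)}}$ on a neighborhood of size $d^{O(d)}\eps^{-O(d)}$; since $d^{O(d)}\eps^{-O(d)}$ fits inside the exponent on $\log n$, this is the dominant term. The random quadtree decomposition $\calD$ and its set of $\irr$ centers $\calM$ can be computed in $n\cdot \log n \cdot \text{poly}(\eps^{-d})$ time by the standard quadtree construction recalled in Section~\ref{sec:diss}, which is subsumed. The preprocessing step on line~\ref{line:preproc} (computing an $O(1)$-approximation and rounding the weights) also takes at most near-linear time by any of the standard constant-factor algorithms for weighted $k$-means and is executed once, hence negligible.

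Next, the inner \textbf{repeat} block is executed $O(\log k)$ times, which I absorb into the $(\log n)^{(d\eps^{-1})^{O(d)}}$ factor since $\log k \le \log n$. So one pass through the body of the do-while loop runs in time $n \cdot (\log n)^{(d\eps^{-1})^{O(d)}}$.

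For the outer loop, I would use a straightforward potential argument. The loop continues only while $\improv > \eps \cost(L)/k$, so whenever a new iteration begins, the current solution $L$ satisfies $\cost(L_{\text{new}}) < (1-\eps/k)\,\cost(L_{\text{old}})$. Starting from the $O(1)$-approximation on line~\ref{line:preproc}, we have $\cost(L_0) = O(\cost(\opt))$, and at all times $\cost(L) \ge \cost(\opt)$. Hence the number of iterations is at most
\[
\frac{\ln(\cost(L_0)/\cost(\opt))}{\ln(1/(1-\eps/k))} = O(k/\eps).
\]
The factor $1/\eps$ is again absorbed into the $(\log n)^{(d\eps^{-1})^{O(d)}}$ term.

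Multiplying the three bounds together gives total running time
\[
O(k/\eps) \cdot O(\log k) \cdot n \cdot (\log n)^{(d\eps^{-1})^{O(d)}} \;=\; n \cdot k \cdot (\log n)^{(d\eps^{-1})^{O(d)}},
\]
as required. The only nontrivial ingredient is the per-iteration bound coming from Proposition~\ref{prop:DP}; everything else is a short potential argument plus standard absorption of polylogarithmic and inverse-polynomial factors of $\eps$ into the exponent of $\log n$. The main potential pitfall is ensuring the do-while termination condition is parsed correctly (improvement measured against the \emph{current} $\cost(L)$ rather than the original), but this is exactly what yields the clean multiplicative $(1-\eps/k)$ decrease used above.
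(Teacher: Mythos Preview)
Your proof is correct and follows essentially the same approach as the paper: invoke Proposition~\ref{prop:DP} for the per-iteration cost, use the geometric decrease from the termination condition to bound the number of do-while iterations, and multiply. The only minor differences are that the paper writes the decrease factor as $(1-1/k)$ and claims $O(k)$ iterations directly (implicitly treating $\eps$ as a constant), whereas you more carefully obtain $(1-\eps/k)$ and $O(k/\eps)$ before absorbing the $1/\eps$; and the paper explicitly names the algorithm of Guha et al.\ (running in time $n\cdot k\cdot\text{polylog}(n)$) for line~\ref{line:preproc}, which is not quite ``near-linear'' as you state but is still dominated by the final bound.
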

\begin{proof}
  By Proposition~\ref{prop:DP}, 
  we only need to bound the number of iterations
  of the \texttt{do-while} loop (lines~\ref{line:do} 
  to~\ref{line:while}) of 
  Algorithm~\ref{algo:Clustering}. 
  Let $\cost(S_0)$ denotes the cost of the initial solution.
  The number of iterations of the \texttt{do-while} loop is
  $$\frac{\log(\cost(S_0) / \cost(\opt))}{\log(\frac{1}{1-1/k})}.$$
  Assuming $\cost(S_0) \le O(\opt)$, we have that 
  the total number of iterations is at most
  $O(k)$. To obtain 
  $\cost(S_0) \le  O(\opt)$ it is possible 
  to use the 
  algorithm of Guha et al.~\cite{guha2000clustering}
  which outputs an $O(1)$-approximation 
  in time $n \cdot k \cdot \text{polylog}(n)$, 
  as a preprocessing step (\ie for the computations
  at line~\ref{line:preproc}), without
  increasing the overall running time.
\end{proof}

We conclude by bounding the probability of failure:
By lemma~\ref{lemma:keyprob}, Event $\event$ happens with 
probability at least 1/2. Since the random dissection is repeated 
independently $c \cdot \log(k)$ times, 
the probability of failure of Event $\event$
for a given iteration of the while loop is at most $2^{-c\cdot \log k}
= k^{-c}$. Now, by Lemma~\ref{lemma:runningtime}, the do-while loop 
is repeated
a total of at most $O(k)$ times, thus the probability of failure
is at most $O(k^{-c+1})$.

\section{A Dynamic Program to Find the Best Improvement}
\label{sec:DP}
For a given solution $L$, we define a solution $L'$ to 
be $\delta$-close from $L$ if $|L-L'| + |L'-L| \le \delta$. Let 
$\opt_{\delta}$ denote the cost of the best solution that is 
$\delta$-close from $L$ and whose set of $\irr$ centers
is the set of $\irr$ centers of $L$.
In the following we refer to this solution
by the \emph{best $\delta$-close solution}.

As a preprocessing step, we round the weights of the centers
to the 
closest $(1+\eps)^i \opt/n$, for some integer $i$.
It is easy to see
that this only modify the total value by a factor $(1+\eps)$.

For each region $R$, we define the center of the region $c_R$ to
be the center of the square $R$.
For each point $p$ that is outside of $R$ and at distance at least
$\eps \partial R/\log n$ of $R$, we define the coordinates
of $p$ w.r.t. $c_R$ as follows. Consider the coordinates of the vector 
$\overrightarrow{c_R p}$,
rounded to the closest $(1+\eps/\log n)^i \eps^{14} \partial R/\log n$,
for some integer $i$. Let $\overrightarrow{\widetilde{c_R p}}$ be the
resulting list of coordinates. Let $s$
the point such that the coordinates of the vector $\overrightarrow{c_R s}$
are equal to the coordinates of $\overrightarrow{\widetilde{c_R p}}$.
We define the coordinates of $p$ rounded w.r.t. $R$ to be the coordinates of $s$.

For each region $R$ we also define the grid $G_R$ of $R$ as the $d$-dimensional grid of
size $2\log n/\eps^{14} \times ...\times 2\log n/\eps^{14}$ on $R$. Note that the distance between
two consecutive points of the grid is $\eps^{14} \partial R/2\log n$.
For each point $p$ that is inside of $R$, we define the coordinates of 
$p$ rounded w.r.t. $R$ to be the coordinates of the closest grid point.

The following fact follows from the definition and
recalling that the region sizes
are in the interval $[1,\text{poly}(n)]$.
\begin{fact}
  \label{fact:size1}
  For any region $R$,
  the number of different coordinates rounded w.r.t. to $R$ is at most 
  $O((\log n/\eps^{14})^{2d}$.
\end{fact}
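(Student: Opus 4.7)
The plan is to bound the count by separating the two cases in the definition: points inside $R$ are rounded to the grid $G_R$, while points outside $R$ (at distance at least $\eps\partial R/\log n$ from $R$) are rounded via the logarithmic quantization applied to the coordinates of $\overrightarrow{c_R p}$. Since the target bound $O((\log n/\eps^{14})^{2d})$ offers generous slack, it suffices to bound each per-coordinate count by $\text{poly}(\log n,\eps^{-1},d)$ and raise to the $d$-th power, absorbing polynomial factors in $d$ into the large exponent on $\eps^{-1}$.

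For the interior contribution the bound is immediate from the definition: the rounded coordinate of a point inside $R$ is a point of the grid $G_R$, which has $2\log n/\eps^{14}$ points per side and hence at most $(2\log n/\eps^{14})^d$ distinct positions, comfortably within the target.

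For the exterior contribution I would count admissible values of the exponent $i$ that parametrises the quantum $(1+\eps/\log n)^i\cdot \eps^{14}\partial R/\log n$ appearing on each coordinate. The preprocessing in Section~\ref{sec:diss} places all input points inside a bounding box of side $\calL=n\cdot\text{poly}(\eps^{-d})$, so any coordinate of $\overrightarrow{c_R p}$ has magnitude at most $\calL$; meanwhile the quadtree stopping rule ensures $\partial R\ge 1$, so the smallest positive quantum per coordinate is at least $\eps^{14}/\log n$. Using the elementary estimate $\ln(1+x)\ge x/2$ for small $x>0$ to get $\ln(1+\eps/\log n)\ge \eps/(2\log n)$, the admissible range of $i$ has size at most $(2\log n/\eps)\cdot\ln(\calL\log n/\eps^{14})=\text{poly}(\log n,\eps^{-1},d)$. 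Doubling for the sign of each coordinate and raising to the $d$-th power, this bounds the number of distinct rounded exterior vectors by $\text{poly}(\log n,\eps^{-1},d)^d$, which is absorbed by the $2d$ exponent and the $\eps^{-14}$ slack in the target.

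The only nontrivial step, and hence the main obstacle, is the base-change estimate certifying that the geometric progression $(1+\eps/\log n)^i$ admits only $\text{poly}(\log n,\eps^{-1})$ valid exponents in the magnitude range $[\eps^{14}/\log n,\calL]$; this is a standard $\ln(1+x)\approx x$ computation, used in tandem with the preprocessing guarantees $\partial R\ge 1$ and $\calL=n^{O(1)}\eps^{-O(d)}$ recalled at the beginning of Section~\ref{sec:diss}. Summing the interior and exterior counts then yields the claimed $O((\log n/\eps^{14})^{2d})$ bound.
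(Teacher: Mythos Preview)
Your proposal is correct and follows exactly the approach the paper gestures at: the paper merely asserts that the fact ``follows from the definition and recalling that the region sizes are in the interval $[1,\text{poly}(n)]$,'' and your argument fills in precisely those details---the interior grid $G_R$ gives at most $(2\log n/\eps^{14})^d$ values, while for the exterior case the bounds $\partial R\ge 1$ and $\calL=\text{poly}(n,\eps^{-d})$ limit the range of admissible exponents $i$ to $\text{poly}(\log n,\eps^{-1},d)$ per coordinate. The base-change computation you flag as the ``main obstacle'' is indeed the only nontrivial step, and your treatment of it is sound.
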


We now describe the dynamic program. Each entry of the table is
defined
by the following parameters:
\begin{itemize}
\item a region $R$,
\item a list of the
  rounded coordinates w.r.t. $R$ of the centers
  of $L - \opt_{\delta}$ and $\opt_{\delta} - L$,
\item a list of the (rounded) weights of the centers of $L - \opt_{\delta}$ and $\opt_{\delta}- L$.
\item a boolean vector of length $\delta$ indicating
  whether $i$th center
  in the above lists is in $L-\opt_{\delta}$ (value 0)or 
  in $\opt_{\delta}-L$ (value 1).
\end{itemize}

The following fact follows from the definition and Fact~\ref{fact:size1}.
\begin{fact}
  The total number of entries that are parameterized by region $R$ is at most
  $(\log n/\eps^{14})^{O(d \cdot \delta)}$.
\end{fact}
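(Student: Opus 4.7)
The plan is a direct counting argument that multiplies out the number of choices for each of the four components defining a DP entry: the region $R$ (fixed here), the coordinate list, the weight list, and the boolean vector of length $\delta$.

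First I would handle the coordinates by simply raising the bound of Fact~\ref{fact:size1} to the power of the list length. Since any $\delta$-close solution satisfies $|L-\opt_\delta|+|\opt_\delta-L|\le \delta$, the coordinate list has at most $\delta$ entries, each drawn from a pool of $(\log n/\eps^{14})^{O(d)}$ rounded positions w.r.t.\ $R$. This gives at most $(\log n/\eps^{14})^{O(d\delta)}$ coordinate lists.

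Next I would count the weights. The preprocessing at line~\ref{line:preproc} rounds every candidate weight to the closest $(1+\eps)^i\cdot\eps\cdot\cost(L)/n$, and any center whose weight exceeds $\cost(L)$ cannot appear in $\opt_\delta$ (opening it alone would already exceed the current objective). Hence the number of distinct rounded weight values is $O(\eps^{-1}\log(n/\eps))=(\log n/\eps)^{O(1)}$, and the number of weight lists of length $\delta$ is at most $(\log n/\eps)^{O(\delta)}$. The boolean vector contributes an additional factor of $2^\delta$. Multiplying the three factors absorbs the last two into the first, giving the claimed bound of $(\log n/\eps^{14})^{O(d\delta)}$.

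There is no real obstacle; the argument is essentially an application of Fact~\ref{fact:size1} together with routine multiplication. The only mild subtlety is justifying an a priori upper bound on the rounded weights so that the weight-list factor is polylogarithmic rather than unbounded; this follows from discarding dominated weights greater than $\cost(L)$.
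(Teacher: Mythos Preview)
Your proposal is correct and follows essentially the same approach as the paper, which simply states that the fact ``follows from the definition and Fact~\ref{fact:size1}''; you have merely spelled out the multiplication of the per-component counts that the paper leaves implicit. One cosmetic remark: the weight rounding you invoke is described at the start of Section~\ref{sec:DP} (to the closest $(1+\eps)^i\opt/n$) rather than at line~\ref{line:preproc}, but this does not affect the argument.
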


We now explain how to fill-up the table. We maintain the
following constraint when we compute a (possibly partial)
solution $L'$: there is no center of $(L' - L) \cup (L-L')$ that
is $\irr$. Under this constraint, we proceeds
as follows, starting with the basic regions which define
the base-case of our DP.
The base-case regions contains only one single
candidate center. Hence, the algorithm proceeds as follows:
it fills up table entries that are parameterized by:
\begin{itemize}
\item any boolean vector of length $\delta$.
\item the rounded coordinates of the unique candidate center
  inside $R$ and a set of
  $\delta-1$ rounded coordinates for the centers of $(L-\opt_{\delta}) \cup (\opt_{\delta}-L)$ outside $R$, or
\item a set of
  $\delta$ rounded coordinates for the centers of $(L-\opt_{\delta}) \cup (\opt_{\delta}-L)$ outside $R$.
\end{itemize}
Additionally, we require that the boolean vector is consistent
with the rounded coordinates: the candidate center inside
$R$ is already in $L$ if and only if
its corresponding boolean entry is 0.

It iterates over all possible rounded coordinates for the at most $\delta$ centers
of $(L-\opt_{\delta}) \cup (\opt_{\delta}-L)$ outside $R$ and for each of possibility, it computes
the cost. Note that this can be done in time $n \cdot \delta$.

We now consider the general case which consists in merging table
entries of child regions.
Fix a table entry parameterized by a region $R$, and
the rounded coordinates of the centers of $(L-\opt_{\delta}) \cup (\opt_{\delta}-L)$. 
We define which 
tables entries of the child regions are compatible given the rounded
coordinates.
For a table entry of a child region $R_1$, with the rounded coordinates of the centers
of $(L-\opt_{\delta}) \cup (\opt_{\delta}-L)$, we require the following for all center $c_0 \in (L-\opt_{\delta}) \cup (\opt_{\delta}-L)$.
Denote by $\tilde{c}_0^1$ the coordinates of $c_0 \in (L-\opt_{\delta}) \cup (\opt_{\delta}-L)$
rounded w.r.t. $c_{R_1}$, namely its values in the table entry
for $R_1$. Let $\tilde{c}_0^R$ denote
its rounded coordinate w.r.t. $R$, namely its values
in the table entry for $R$.
We require:
\begin{itemize}
\item If $\tilde{c}_0^R$ is outside $R$, we say that the table entries are compatible for $c_0$
  if the coordinates of the vector $\overrightarrow{c_R \tilde{c}_0^1}$ and
  $\overrightarrow{c_R \tilde{c}_0^R}$ are all within a $(1\pm\eps/\log n)$ factor.
\item If $c_0$ is inside $R$, we say that the table entries are compatible for $c_0$
  if the point of the grid $G_R$ that is the closest to $\tilde{c}_0^1$ is $\tilde{c}_0^R$.
\item the entries corresponding to $c_0$ in the boolean vectors
  are the same.
\end{itemize}

The following lemma follows immediately from the above facts and the definition.
\begin{lemma}
   \label{lem:dpRT}
  The running time of the dynamic program is $n (\log n/\eps)^{O(d\delta)}$.
\end{lemma}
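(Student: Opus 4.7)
The plan is to bound the total running time by multiplying three quantities: the number of regions in the dissection, the number of table entries per region, and the time to compute each entry. From Section~\ref{sec:diss} the quadtree has $n \cdot \log n \cdot \text{poly}(\eps^{-d})$ nodes, so absorbing lower-order factors this contributes the leading $n$ in the final bound.

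Next I would bound the number of table entries associated with a single region $R$. Fact~\ref{fact:size1} already gives $(\log n/\eps^{14})^{O(d)}$ distinct rounded coordinate values per center, and the weight preprocessing at line~\ref{line:preproc} of Algorithm~\ref{algo:Clustering} rounds weights to multiples of $(1+\eps)^i \eps\,\cost(L)/n$, so each center has only $O(\eps^{-1}\log n)$ possible rounded weight values. Since the DP tracks at most $\delta$ such centers plus a boolean vector of length $\delta$, taking the product over these $\delta$ slots gives at most
\[
\Bigl[(\log n/\eps^{14})^{O(d)} \cdot O(\eps^{-1}\log n) \cdot 2\Bigr]^{\delta} \;=\; (\log n/\eps)^{O(d\delta)}
\]
entries per region, matching the second multiplicative factor.

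Now I would bound the work per entry. For a basic region, the center inside $R$ is fixed and the base case simply tries all $(\log n/\eps)^{O(d\delta)}$ possible configurations of the (at most $\delta$) outside centers and evaluates the cost; doing this naively costs $n\delta$ per configuration, but amortized across all entries of the region this contributes only $(\log n/\eps)^{O(d\delta)}$ per entry after charging the $n$ factor globally. For an internal region $R$, computing a fixed entry requires combining one compatible entry from each of its $2^d = O(1)$ children. The compatibility rules spelled out just before Lemma~\ref{lem:dpRT} force, for each tracked center $c_0$, the child's rounded coordinate to round under the $(1\pm\eps/\log n)$ (or grid-snap) rule to the parent's rounded coordinate, so for each child only a bounded number of entries are compatible per entry of the parent. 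Since the number of children is $O(1)$ and $d$ is treated as constant, the merging time per parent entry is at most $(\log n/\eps)^{O(d\delta)}$.

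Multiplying the three factors gives $n \cdot \text{polylog}(n) \cdot (\log n/\eps)^{O(d\delta)} \cdot (\log n/\eps)^{O(d\delta)} = n\,(\log n/\eps)^{O(d\delta)}$, which is the claimed bound. The main subtlety I expect to need to argue carefully is the merge step: I must show that the coordinate-rounding rules are transitive enough that a parent entry uniquely (or nearly so) determines the compatible child entries, so that the total work over all parent--child combinations is bounded by the number of entries itself and does not blow up by an extra $(\log n/\eps)^{O(d\delta)}$ factor per child; this is exactly what the definition of compatibility for outside centers (round-to-within $(1\pm\eps/\log n)$) and for inside centers (snap to grid $G_R$) is engineered to provide.
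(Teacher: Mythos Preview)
Your argument is correct and is essentially the natural elaboration of the paper's one-line justification (``follows immediately from the above facts and the definition''): count regions, count entries per region via Fact~\ref{fact:size1} and the weight rounding, and bound the merge work.

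One simplification: the subtlety you flag at the end---showing that compatibility pins down child entries almost uniquely---is not actually needed. Because the cost stored in an entry for $R$ is the sum of the costs over its $2^d$ children, the merge is separable: for a fixed parent entry you may independently scan \emph{all} entries of each child and keep the cheapest compatible one. That costs at most $2^d \cdot (\log n/\eps)^{O(d\delta)}$ per parent entry, which is still $(\log n/\eps)^{O(d\delta)}$, so no transitivity argument about the rounding is required. Your base-case accounting is also fine once you note that each leaf of the quadtree contains at most one input point, so evaluating one configuration there is $O(\delta)$ rather than $O(n\delta)$; summed over the $O(n\log n)$ leaves and $(\log n/\eps)^{O(d\delta)}$ configurations this gives the stated bound directly, without the amortization step you sketch.
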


We now turn to the proof of correctness. For a given region $R$, and a $\delta$-close
solution $S$
we define the table entry of $R$ induced by $S$ to be the table entry parameterized
by $R$ and the coordinates of the centers of $(L-S) \cup (S-L)$ rounded w.r.t. $R$.
\begin{lemma}
  \label{lem:dpcorrect}
  Consider the best $\delta$-close 
  solution $\opt_{\delta}$. For any level $i$ of the quadtree
  dissection, for any region $R$ at level $i$,
  we have that the table entry induced by
  $\opt_{\delta}$ has cost at most
  $ \sum_{c \in R} ((1+\eps/\log n)^i \dist(c, \opt_{\delta}))^2$.
\end{lemma}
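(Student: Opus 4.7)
The plan is to prove Lemma~\ref{lem:dpcorrect} by induction along the quadtree, processing regions in the order the dynamic program fills its table (basic regions first, ascending to the root). At each region $R$ I identify a specific table entry I call ``induced by $\opt_\delta$'': an entry whose boolean vector labels each center of $(L-\opt_\delta)\cup(\opt_\delta-L)$ as addition or deletion and whose rounded coordinates are chosen--- exploiting the fact that the DP enumerates over all rounded-coordinate tuples---to ensure both that (i) the induced child entries are compatible with the induced parent entry and (ii) the DP-computed cost meets the claimed bound.

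For the base case, $R$ is a basic region containing exactly one candidate center and at most one client. The DP evaluates every parameterized entry explicitly, so it in particular considers the entry where the rounded coordinates are chosen to be the closest rounding of each modification center's true $\opt_\delta$-position. Its cost is the squared distance from the (at most one) client in $R$ to the nearest center of $L-\Delta_1+\Delta_2$; rounding w.r.t.\ $c_R$ distorts this distance by at most a $(1+\eps/\log n)$ factor, which fits inside the base slack.

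For the inductive step, let $R$ be at level $i$ with children $R_1,\ldots,R_{2^d}$ and suppose the claim holds at each $R_j$. The subclaim I need is that there exist child entries---one for each $R_j$---simultaneously compatible with the parent entry induced by $\opt_\delta$ and satisfying the child bound. I split the compatibility verification on the position of each $c_0\in(L-\opt_\delta)\cup(\opt_\delta-L)$: (a) if $c_0$ lies outside $R$ (hence outside each $R_j$), both $\tilde c_0^{R_j}$ and $\tilde c_0^R$ come from multiplicative rounding, and I check coordinate-wise $(1\pm\eps/\log n)$ compatibility using that $\|c_{R_j}-c_R\|\le\partial R$ while the rounding precision is $O(\eps/\log n)\|c_R-c_0\|$; (b) if $c_0$ lies inside $R$ but outside $R_j$, the base spacing of the multiplicative grid at $c_{R_j}$ equals the cell size $\eps^{14}\partial R/(2\log n)$ of $G_R$, so I can pick an $R_j$-rounded value within half a $G_R$-cell of $c_0$ whose closest $G_R$ point is exactly $\tilde c_0^R$; (c) if $c_0$ lies inside $R_j$, both roundings live on grids of comparable spacing and an analogous snap-to-nearest argument works. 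Throughout, the DP invariant that no center of $(L-\opt_\delta)\cup(\opt_\delta-L)$ is $\irr$, justified by Proposition~\ref{prop:nearopt}, guarantees $c_0$ sits at distance at least $\eps\partial R/\log n$ from the relevant boundaries, legitimizing the multiplicative rule wherever it is invoked.

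Once compatible child entries have been chosen, the DP merges them into the $\opt_\delta$-induced parent entry with cost equal to the sum of child costs. Summing the inductive bounds at the $R_j$'s gives $(1+\eps/\log n)^{2(i-1)}\sum_{c\in R}\dist(c,\opt_\delta)^2$, which is absorbed by the target $(1+\eps/\log n)^{2i}\sum_{c\in R}\dist(c,\opt_\delta)^2$; the extra $(1+\eps/\log n)^2$ of slack per level is exactly what pays for the compatibility slack of case (a). The main obstacle I anticipate is case (b): matching a multiplicative grid based at $c_{R_j}$ against the uniform grid $G_R$ of $R$ is delicate because the two grids are not nested and their reference points differ. The argument succeeds only thanks to the $\eps^{14}$ factor in the rounding precision---chosen precisely so that the base-spacing of the multiplicative grid at $c_{R_j}$ agrees with the cell size of $G_R$, which in turn guarantees an $R_j$-rounded value near $c_0$ whose $G_R$ snap is $\tilde c_0^R$.
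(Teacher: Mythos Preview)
Your proposal follows essentially the same approach as the paper: induction up the quadtree from basic regions to the root, gaining a $(1+\eps/\log n)$ factor per level, with the key observation that every center in $(L-\opt_\delta)\cup(\opt_\delta-L)$ is non-$\irr$ and therefore sits at distance at least $\eps^{13}\partial R/\log n$ from the boundary of every relevant region. You are more explicit than the paper about verifying the DP's compatibility constraints (your case split (a)/(b)/(c)); the paper instead argues directly about how far the rounded position of a center can drift when passing from a child rounding to the parent rounding, and bounds the resulting service-cost increase for each client. Both routes reach the same conclusion, and your compatibility-first phrasing arguably tracks the DP's mechanics more faithfully.

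Two minor corrections. First, the non-$\irr$ property of the symmetric-difference centers is part of the \emph{definition} of $\opt_\delta$ (see the opening paragraph of Section~\ref{sec:DP}), not a consequence of Proposition~\ref{prop:nearopt}; that proposition is only invoked later, in Section~\ref{sec:mainthm}, to argue that restricting the DP to such solutions loses at most $O(\eps)\cost(\opt)$. Second, your sentence ``the extra $(1+\eps/\log n)^2$ of slack per level is exactly what pays for the compatibility slack of case~(a)'' misplaces the accounting: once compatible child entries are fixed, the merge step simply sums their costs and introduces no new error. The per-level factor is consumed one level \emph{down}---it pays for the fact that the child entry you exhibit may use rounded coordinates that deviate from the true center by up to one cell of $G_R$ (case~(b)/(c)) or one step of the multiplicative grid (case~(a)), and that deviation shifts each client--center distance by at most an $\eps/\log n$ fraction. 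This is exactly what the paper's proof calls ``the rounding error introduced for the centers that are inside $R$'' and ``outside $R$'' respectively.
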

\begin{proof}
  Observe that we consider a $\delta$-close 
  solution that has the same set of $\irr$ centers than $L$. Hence, if a client is served by a $\irr$ center in $\opt_{\delta}$, we know exactly the position of this center (as it is also in $L$ and so cannot be removed).
  Thus, for any region $R$, the set of clients in $R$ is served
  by either a center in $R$
  or a center at distance at least $\eps^{13}\partial R/\log n$ from
  the boundary of $R$ or a center of $L$ that is a $\irr$ center.
  
  We now proceed by induction. We consider the base case:
  let $R$ be a region at the maximum level.
  Consider the table entry induced by $\opt_{\delta}$.
  We claim that for each client $c$ in $R$, the cost induced by the solution for this table entry is
  at most $(1+\eps/\log n) \dist(c,\opt_{\delta})$. Indeed, since $\opt_{\delta}$ has the same set of $\irr$ centers
  each client that is served by a center outside of $R$ that
  is at distance at most $\eps^{13} \partial R /\log n$ from the boundary is served by a $\irr$ center of $L$ and so there is no approximation in its service cost. Each client that is served by a center of $\opt_{\delta} - L$ is at distance at least $\eps^{13} \partial R/\log n$
  and so, the error induced by the rounding is at most $\eps\dist(c,\opt_{\delta})/\log n$.
  Finally,
  the cost for the clients in $R$ served by the unique center of $R$ (if there is one) is exact.

  Thus, assume that this holds up to level $i-1$. Consider a region $R$ at level $i$ and the
  table entry induced by $\opt_{\delta}$. The inductive hypothesis implies that for each of the
  table entries of the child regions that are induced by $\opt_{\delta}$, the cost for the clients
  in each subregion is at most $(1+\eps/\log n)^{i-1} \sum_{c \in R} \dist(c,\opt_{\delta})^2$.
  
  By definition, we have that each client of $R$ that is served in $\opt_{\delta}$ by a center
  that is outside $R$ is at distance at least $\eps^{13} \partial R/\log n$ or a $\irr$ center of $L$ (and so the distance is
  known exactly).
  Thus the rounding error incurred for the cost of the clients
  of $R$ served by a center outside $R$ is at most
  $(1+\eps/\log n)^{i} \dist(c, \opt_{\delta})$.

  We now turn to the rounding error introduced 
  for
  the centers that are inside $R$. Let $c$
  be such a center.
  We have that the error introduced is at most the distance
  between two consecutive grid points
  and so at most $\eps^{14} \partial R/2\log n$. Now, observe that
  again because $\opt_{\delta}$ shares the
  same $\irr$ centers
  than $L$,  each client $a$ pf a
  child region $R_1$ that suffers some rounding
  error and that is served by $c$,
  is at distance at least $\eps^{13}\partial R_1/\log n$
  from $c$ and so, combining with the inductive hypothesis
  the error incurred is at most $(1+\eps/\log n)^i\dist(a,c)$.
\end{proof}

Proposition~\ref{prop:DP} follows from combining Lemmas~\ref{lem:dpRT} and~\ref{lem:dpcorrect}.
\begin{prop} 
  \label{prop:DP}
  Let $\eps>0$ be a small enough constant.       
  Let $L$ be a solution and $\calD$ be a decomposition
  and $\calM$ be the $\irr$ centers of $L$.
  The dynamic program FindImprovement output a solution 
  $\opt_{\delta}$ such that 
  $\cost(L) - \cost(\opt_{\delta}) \ge 
  (1-\eps)(\cost(L) - \cost(\opt_{\delta}))$,
  where $\opt_{\delta}$ is a minimum-cost
  solution that is $\delta$-close
  from $L$ and
  whose set of $\irr$ centers is the set of
  $\irr$ centers of $L$.
  Its running time is
  $n \cdot (\log n/\eps^{14})^{O(d \delta)}$.
\end{prop}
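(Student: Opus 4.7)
The plan is to derive Proposition~\ref{prop:DP} by stitching together the two ingredients already developed for the dynamic program: the cell-count and merge-time bound of Lemma~\ref{lem:dpRT}, and the per-level rounding guarantee of Lemma~\ref{lem:dpcorrect}. The running-time half is essentially a restatement of Lemma~\ref{lem:dpRT}: Fact~\ref{fact:size1} bounds the number of rounded coordinate tuples per region by $(\log n/\eps^{14})^{O(d\delta)}$, each merge with at most $2^d$ children costs no more than this, and the quadtree has $O(n\log n)$ nodes. Multiplying gives the claimed $n \cdot (\log n/\eps^{14})^{O(d\delta)}$ bound.

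For the correctness half, I would show that the DP eventually produces a solution of cost at most $(1+O(\eps))\cost(\opt_{\delta})$. Concretely, consider the table entry at the root region induced by $\opt_{\delta}$: its parameters are the rounded coordinates (w.r.t.\ the root) of the at most $\delta$ centers in $(L-\opt_{\delta})\cup(\opt_{\delta}-L)$ together with the boolean vector recording their membership. This entry is enumerated by the DP. Applying Lemma~\ref{lem:dpcorrect} at the root level $i=O(\log n)$ bounds the stored cost by $(1+\eps/\log n)^{2i}\cost(\opt_{\delta}) \le (1+O(\eps))\cost(\opt_{\delta})$. Since the DP returns the minimum over all entries, the output $S'$ satisfies $\cost(S')\le(1+O(\eps))\cost(\opt_{\delta})$, which, after absorbing constants into $\eps$, rearranges to $\cost(L)-\cost(S')\ge(1-\eps)(\cost(L)-\cost(\opt_{\delta}))$ as claimed.

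The main obstacle lies inside Lemma~\ref{lem:dpcorrect}: one must argue that when induced entries at child regions are assembled into a parent entry, the compatibility predicate is satisfied and the per-level distortion $(1+\eps/\log n)$ accumulates to only $(1+O(\eps))$ across the $O(\log n)$ levels. Two properties are crucial here, both supplied by Proposition~\ref{prop:nearopt}: first, every center of $(L-\opt_{\delta})\cup(\opt_{\delta}-L)$ is non-$\irr$, so any client served across a region boundary lies at distance at least $\eps^{13}\partial R/\log n$ from its serving center, which makes the $\eps^{14}\partial R/\log n$ rounding scale contribute at most a $(1\pm\eps/\log n)$ multiplicative factor; second, the $\irr$ centers of $\opt_{\delta}$ coincide with those of $L$ and so are represented exactly, contributing no rounding error at all. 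Verifying that these margins are preserved by the compatibility predicate linking parent and child rounded coordinates, simultaneously at every level of the quadtree, is the technical heart of the argument.
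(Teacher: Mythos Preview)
Your plan matches the paper's exactly: the paper's proof of Proposition~\ref{prop:DP} is the single sentence ``follows from combining Lemmas~\ref{lem:dpRT} and~\ref{lem:dpcorrect},'' and your write-up is a faithful expansion of that combination. One small misattribution: the two structural facts you invoke at the end (that every center of $(L-\opt_\delta)\cup(\opt_\delta-L)$ is non-$\irr$, and that the $\irr$ centers of $\opt_\delta$ are exactly those of $L$) are part of the \emph{hypothesis} on $\opt_\delta$ in the proposition statement, not consequences of Proposition~\ref{prop:nearopt}; the latter concerns the existence of the specific solution $S^*$ and plays no role in proving Proposition~\ref{prop:DP} itself.

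There is one genuine gap in your argument, and the paper's one-line derivation glosses over the same point. From Lemma~\ref{lem:dpcorrect} at the root level you correctly obtain $\cost(S')\le(1+O(\eps))\cost(\opt_\delta)$. But this does \emph{not} ``rearrange'' to $\cost(L)-\cost(S')\ge(1-\eps)\bigl(\cost(L)-\cost(\opt_\delta)\bigr)$: the additive error $O(\eps)\cost(\opt_\delta)$ need not be bounded by any constant multiple of $\cost(L)-\cost(\opt_\delta)$, since the latter can be arbitrarily small relative to $\cost(\opt_\delta)$ (precisely the regime near local optimality). No amount of absorbing constants into $\eps$ bridges a multiplicative error on $\cost(\opt_\delta)$ to a multiplicative error on the difference. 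The paper invokes exactly this difference form in Section~\ref{sec:mainthm}, so the issue is not merely cosmetic; a clean fix would be to state and use the weaker conclusion $\cost(S')\le(1+O(\eps))\cost(\opt_\delta)$ directly and redo the downstream arithmetic accordingly.
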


\bibliographystyle{abbrv}
\bibliography{facilitylocationptas.bib}
\end{document}